\definecolor{mygray}{rgb}{0.9,0.9,0.9}
\DeclareMathOperator*{\argmin}{argmin}
\DeclareMathOperator{\E}{\mathbb{E}}
\newcommand{\ctext}[1]{\raise0.2ex\hbox{\textcircled{\scriptsize{#1}}}}
\newtheorem{theorem}{Theorem}[section]
\theoremstyle{definition}
\numberwithin{equation}{section}
\title{Robust Estimation of Item Parameters via Divergence Measures in Item Response Theory}
\author{Yuki Itaya\thanks{Graduate School of Science and Technology, Keio University, Japan}, \hspace{3mm}
        Kenichi Hayashi\thanks{Department of Mathematics, Keio University, Japan}}
\begin{document}
\maketitle

\small
\noindent \textbf{Correspondence}\\ 
Yuki Itaya, Graduate School of Science and Technology, Keio University \\
3-14-1 Hiyoshi, Kohoku, Yokohama, Kanagawa, 223-0061, Japan \\
Email: \url{yuki_0929@keio.jp} \\

\noindent \textbf{Acknowledgments} \\
I am deeply grateful to Associate Professor Kenichi Hayashi, my advisor and collaborator, for his invaluable guidance and support. This work was supported by the Japan Society for the Promotion of Science under Grant Numbers 23K11013. \\

\noindent \textbf{Data Availability Statement}\\
The code for estimating item parameters using the proposed methods is available at \url{https://github.com/yukiitaya/Robust_IRT}. \\

\noindent \textbf{Keywords}: robust estimation, item response theory, IRT, density power divergence, $\gamma$-divergence\\

\normalsize
\begin{abstract}
Marginal maximum likelihood estimation (MMLE) in item response theory (IRT) is highly sensitive to aberrant responses, such as careless answering and random guessing, which can reduce estimation accuracy. To address this issue, this study introduces robust estimation methods for item parameters in IRT. Instead of empirically minimizing Kullback--Leibler divergence as in MMLE, the proposed approach minimizes the objective functions based on robust divergences, specifically density power divergence and $\gamma$-divergence. The resulting estimators are statistically consistent and asymptotically normal under appropriate regularity conditions. Furthermore, they offer a flexible trade-off between robustness and efficiency through hyperparameter tuning, forming a generalized estimation framework encompassing MMLE as a special case. To evaluate the effectiveness of the proposed methods, we conducted simulation experiments under various conditions, including scenarios with aberrant responses. The results demonstrated that the proposed methods surpassed existing ones in performance across various conditions. Moreover, numerical analysis of influence functions verified that increasing the hyperparameters effectively suppressed the impact of responses with low occurrence probabilities, which are potentially aberrant. These findings highlight that the proposed approach offers a robust alternative to MMLE, significantly enhancing measurement accuracy in testing and survey contexts prone to aberrant responses.
\end{abstract}

\newpage
\normalsize
\section{Introduction}\label{Introduction}
Assessing latent traits---such as cognitive abilities or personality characteristics---through tests and surveys remains fundamental across fields including education, psychology, and marketing (American Educational Research Association et al., \cite{aera2014}). A key challenge in this process lies in ensuring that measurement instruments capture these traits accurately and consistently, while mitigating distortions caused by unique sample attributes or specific test contexts.

Item response theory (IRT) provides a powerful framework for addressing these concerns by modeling item responses as functions of latent traits (Embretson \& Reise, \cite{embretson2000}). By estimating item-level parameters such as difficulty and discrimination, IRT offers refined, interpretable measures of individual differences. Furthermore, its capacity to maintain parameter invariance across diverse groups ensures consistent score interpretations, enhancing comparability and making it especially valuable for large-scale assessments and adaptive testing. Consequently, IRT has become a mainstay in fields where precise and reliable measurement is paramount.

While IRT has notable advantages, a growing body of research indicates that aberrant responses---patterns of responding that deviate significantly from model expectations---can bias parameter estimates and reduce accuracy (van der Linden \& Hambleton, \cite{vanderlinden1997}). These irregularities may arise from various factors, including test-taker inattention, low motivation, and differential item functioning (DIF) in certain items, which may systematically benefit some groups while disadvantaging others, ultimately compromising the validity of the assessment. If such aberrant responses are not considered, they can substantially compromise the validity and reliability of IRT analyses. Consequently, addressing such responses is crucial for preserving the integrity of IRT-based analysis of tests.

One common approach to handling aberrant responses is applying person-fit statistics, which quantify discrepancies between observed responses and expected response patterns under the model assumptions, and removing respondents identified as outliers (Meijer \& Sijtsma, \cite{meijer2001}; \"Ozt\"urk \& Karabatsos, \cite{ozturk2017}). An alternative strategy is to weight participants' data based on these person-fit statistics and then maximize a weighted marginal likelihood (Hong \& Cheng, \cite{hong2019}). Both approaches enhance the robustness of parameter estimates by mitigating the effects of responses that deviate from expected patterns. However, removing responses or weighting participants entails a risk of losing valuable information from the original dataset. Additionally, the challenge of ``double usage''---employing the same dataset for both weight calculation and parameter estimation---raises concerns about potential biases and credibility of subsequent inferences (Johnson \& Albert, \cite{johnson2006}; Michaelides, \cite{michaelides2010}).

Another line of research incorporates outliers or atypical respondent behavior directly into the measurement model, enabling their effects to be accounted for during parameter estimation. For example, Shao et al.~(\cite{shao2016}) and Wang et al.~(\cite{wang2018}) modeled behaviors such as test speededness,\footnote{Referring to the influence of time constraints on performance when response speed is not the target construct of measurement.} rapid guessing and cheating as part of the measurement process, facilitating their detection and correction during parameter estimation. In psychology, B\"ockenholt (\cite{boeckenholt2017}) and Falk and Cai (\cite{falk2016}) treated aberrant responses as ``response styles,'' distinguishing them from the focal traits under measurement. B\"ockenholt (\cite{boeckenholt2017}) employed an IR-tree model to isolate Likert-scale response styles from main constructs, whereas Falk and Cai (\cite{falk2016}) introduced a flexible multidimensional IRT approach accommodating multiple response styles simultaneously. Rather than viewing response styles as mere random noise, these methods treat them as essential parts of the measurement process, thereby enhancing validity by addressing distortions caused by atypical responding.

\"Ozt\"urk and Karabatsos (\cite{ozturk2017}) further advanced this line of research by introducing a robust Bayesian IRT model that automatically identifies outliers without discarding any observations and subsequently down-weights their influence. In particular, they incorporate latent parameters that indicate whether each response is an outlier, thereby estimating both the outlier status and the item parameters within a single framework. While this integrated approach obviates the need for separately detecting outliers, it poses additional challenges. First, the number of estimable parameters increases, expanding the dimensionality of the parameter space and complicating the estimation procedure. Second, employing Markov chain Monte Carlo (MCMC) methods heightens the computational burden. Moreover, model outcomes can be sensitive to the choice of prior distributions and other modeling decisions, necessitating careful consideration of these factors.

Methods incorporating aberrant responses into the model often require assumptions about the mechanisms driving such responses. In practice, however, multiple respondent characteristics and complex cognitive processes frequently interact, and the presumed mechanisms may not always capture the true one (Meade \& Craig, \cite{meade2012}). Given these considerations, both methods that detect outliers statistically and remove or reweight them and methods that incorporate these anomalous responses directly into the model face inherent limitations. In this study, we propose an alternative perspective: making the estimation process itself robust. Specifically, we focus on IRT parameter estimation methods that employ robust divergence measures, focusing on density power divergence (DPD) introduced in Basu et al.~(\cite{basu1998}) and $\gamma$-divergence from Fujisawa and Eguchi (\cite{fujisawa2008}). Utilizing these divergence makes it feasible to reduce estimation bias in the presence of aberrant responses without discarding potentially informative data. To the best of our knowledge, no studies have yet explored robust estimation approaches via divergence measures under IRT models.

The remainder of this paper is organized as follows. In the rest of this section, we provide an overview of the IRT model and briefly discuss conventional estimation methods for item parameters. Section \ref{robust estimation} introduces two robust divergences---DPD and $\gamma$-divergence---and proposes an estimation method grounded in these divergences. Section \ref{Properties of the Robust Estimators} examines the statistical properties of the estimators proposed in Section \ref{robust estimation}. Specifically, we demonstrate that the proposed estimators are consistent and asymptotically normal under appropriate regularity conditions. Additionally, we introduce the influence function to theoretically evaluate the estimators' robustness. Section \ref{Numerical Analysis} evaluates and validates the proposed methods through numerical experiments. This includes comparisons between the proposed and existing methods based on simulations under various scenarios and robustness assessment based on the influence function. The implementation R-code for the proposed methods is publicly available on GitHub (\url{https://github.com/yukiitaya/Robust_IRT.git}). Finally, in Section \ref{Conclusion and Future Directions}, we discuss the key findings and suggest potential avenues for future research.

\subsection{IRT Model}
Item response theory (IRT) posits that the probability of producing a correct (or otherwise specified) response to a test item is determined by the interplay between the respondent's latent trait and item-specific parameters (Baker \& Kim, \cite{baker2004}). Over the years, various IRT models have been developed to capture a broad array of item characteristics, spanning different response formats (e.g., dichotomous or polytomous) and levels of dimensionality (e.g., unidimensional or multidimensional). A core concept in IRT is the item characteristic curve (ICC), which represents the probability of a correct response to a particular item as a function of the respondent's latent trait (often termed ``ability''). This curve provides a visual and a quantitative means of understanding how item difficulty and respondent ability determine response probabilities.

In unidimensional IRT models for binary outcomes, the ICC is typically represented by a logistic function. One general form, known as the three-parameter logistic model (3PLM), can be expressed as
\begin{equation}
P_j(\theta; a_j,b_j,c_j) = c_j + (1-c_j)\frac{1}{1+\exp(-Da_j(\theta-b_j))} \ \ \ \ \ \ j= 1,\dots,J, \label{icc}
\end{equation}
where $P_j(\theta; a_j, b_j, c_j)$ denotes the conditional probability that an respondent given latent trait $\theta$ will respond correctly to item $j\in \{1,\dots, J\}$. The discrimination parameter $a_j$ controls how steeply the curve rises with respect to $\theta$, such that higher values of $a_j$ yield more pronounced changes in response probability with small shifts in $\theta$, thereby enhancing the item's ability to differentiate among respondents. The difficulty parameter $b_j$ indicates the level of $\theta$ at which the probability of a correct response is approximately 50\% (precisely 50\% when $c_j=0$). Higher values of $b_j$ correspond to more challenging items, requiring greater latent ability to achieve a moderate chance of success. The guessing parameter $c_j$ represents the probability of a correct response resulting solely from random guessing, even for individuals with very low ability. As $c_j$ increases, guessing is more influential in the likelihood of a correct response. Finally, $D$ is a scale factor, commonly set to $D = 1.702$, to align the logistic model's parameter scale with the normal ogive model.

IRT models are often categorized according to the number of parameters estimated. When all three parameters $(a_j,b_j,c_j)$ are freely estimated---including the guessing parameter $c_j$---the model is referred to as the three-parameter logistic model (3PLM). This specification provides considerable flexibility by accommodating item difficulty, discrimination, and the possibility of guessing. However, the 3PLM typically requires larger sample sizes for stable estimation and can be more challenging to interpret. In particular, its assumption that guessing probabilities remain constant across respondents (for each item) has been questioned in practice, leading to debates about the model's applicability. As a result, the 3PLM tends to be employed less frequently in practical testing contexts than models with fewer parameters (Embretson \& Reise, \cite{embretson2000}). In contrast, the two-parameter logistic model (2PLM) fixes the guessing parameter at zero ($c_j=0$) and estimates only the discrimination and difficulty parameters. By further constraining item discrimination to be identical across all items ($a_j=1, \, c_j=0$), one obtains the one-parameter logistic model (1PLM), also known as the Rasch model. Although more restrictive, the 1PLM provides significant simplicity in both interpretation and estimation, maintaining its foundational role in educational and psychological measurement.

In this study, we focus on estimation methods for the item parameters in the 1PLM. Suppose that a respondent with ability $\theta$ provides a binary response $u_j \in \{0,1\}$ to item $j \in \{1,\dots,J\}$.
Under the assumption that the responses follow the 1PLM, the probability function $q(u_j|\theta; b_j)$ is given by
\[
q(u_j|\theta; b_j) = P_j(\theta)^{u_j}Q_j(\theta)^{1-u_j} ,
\]
where $P_j(\theta)$ is the 1PLM's ICC with $a_j=1$ and $c_j=0$ in Eq.~\eqref{icc}:
\[
P_j(\theta) = P(u_j=1|\theta;b_j) = \frac{1}{1+\exp(-D(\theta - b_j))},
\]
and $Q_j(\theta)= 1-P_j(\theta)$ represents the probability of an incorrect response for an individual with ability $\theta$. 

In unidimensional IRT models,  it is assumed that one specific latent variable drives all item responses---a property known as unidimensionality, which can be tested via factor analyses or by examining residual structures (e.g., the $Q_3$ statistic proposed by Yen, \cite{yen1984}). If multiple, relatively independent factors are present, the fit of a unidimensional model is likely to suffer (Baker \& Kim, \cite{baker2004}).

Another crucial assumption in IRT is local independence, which stipulates that conditional on the same latent trait $\theta$, the responses across items are mutually independent. Under the assumption of local independence, the probability function for an individual's response pattern $\bm{u}=(u_1,\dots,u_J)$ given ability $\theta$ can be written as the product of the per-item response probabilities:
\begin{align}
q(\bm{u} | \theta; \bm{b}) = \prod_{j=1}^J q(u_j|\theta; b_j) = \prod_{j=1}^J P_j(\theta)^{u_j} Q_j(\theta)^{1-u_{j}}. \label{q}
\end{align}

In IRT models, item parameters are typically estimated using marginal maximum likelihood estimation (MMLE), which employs the expectation--maximization (EM) algorithm (Dempster et al., \cite{dempster1977}) to minimize Kullback--Leibler (KL) divergence. However, conventional MMLE is highly sensitive to aberrant responses, such as careless answering or random guessing (van der Linden \& Hambleton, \cite{vanderlinden1997}; Hong \& Cheng, \cite{hong2019}). This sensitivity can lead to significant reductions in estimation accuracy and reliability.

\section{Robust Estimation}\label{robust estimation}
To address this issue inherent in MMLE, this study introduces robust IRT parameter estimation methods. These approaches replace KL divergence with alternative divergences designed to enhance robustness, mitigating the impact of aberrant responses while maintaining high estimation accuracy. Focusing on the one-parameter logistic model, we formulate the objective function based on density power divergence and $\gamma$-divergence and optimize it via the majorization--minimization (MM) algorithm (a generalization of the EM algorithm) introduced by Lange et al. (\cite{lange2000}). In the MM algorithm, instead of directly optimizing the target objective function, a more tractable surrogate function (majorizer) is constructed. The algorithm ensures monotonic convergence toward the optimum by iteratively minimizing this surrogate function. 

\subsection{Robust Divergences}\label{robust divergence}

In statistical inference, robust divergences are crucial in analyzing data when model assumptions are violated or outliers are present. Among the various divergences proposed to date, density power divergence (DPD) and $\gamma$-divergence have garnered attention for their robustness and theoretical properties. These divergences generalize KL divergence while enhancing resistance to outliers and model misspecification.

DPD is defined as follows (Basu et al., \cite{basu1998}):
\begin{align}
D_{\beta}(p\| q) \coloneqq \frac{1}{\beta(1+\beta)} \int p(x)^{1+\beta} dx - \frac{1}{\beta} \int p(x)q(x)^{\beta} dx + \frac{1}{1+\beta} \int q(x)^{1+\beta} dx, \quad (\beta > 0). \label{DPD}
\end{align}
This divergence converges to KL divergence as $\beta \to 0$. With larger values of $\beta$, the influence of outliers diminishes, underscoring DPD's utility in linking KL divergence with robust estimation methodologies.

On the other hand, $\gamma$-divergence is defined as follows (Fujisawa \& Eguchi, \cite{fujisawa2008}):
\begin{align}
D_{\gamma}(p\| q) \coloneqq \frac{1}{\gamma(1+\gamma)} \log \int p(x)^{1+\gamma} dx - \frac{1}{\gamma} \log \int p(x)q(x)^{\gamma} dx + \frac{1}{1+\gamma} \log \int q(x)^{1+\gamma} dx, \quad (\gamma > 0). \label{gamma}
\end{align}
Similar to DPD, $\gamma$-divergence converges to KL divergence as $\gamma \to 0$. However, its incorporation of logarithmic transformations provides greater robustness against extreme outliers compared to DPD.

A common feature of both divergences is that the choice of the hyperparameters $\beta$ and $\gamma$ directly affects the trade-off between robustness and efficiency. Smaller values bring the divergence closer to KL divergence, improving estimation efficiency but reducing robustness. Conversely, larger values increase robustness to outliers at the cost of lowered efficiency.

\subsection{Divergence-Based Robust Estimation}\label{Divergence-Based Robust Estimation}

This subsection introduces robust estimation procedures based on DPD and $\gamma$-divergence, offering flexible robustness control via hyperparameter tuning. We begin with the DPD-based estimation procedure. Let $\bm{u}_1, \dots, \bm{u}_I$ be independent and identically distributed (i.i.d.) $J$-dimensional random vectors following the true distribution $p$, where each $\bm{u}_i$ represents the response pattern of respondent $i$ to $J$ items.
We examine the DPD-based discrepancy between $p$ and the parametric distribution $q(\bm{u}; \bm{b})$. Using the i.i.d. sample $\{\bm{u}_1, \dots, \bm{u}_I\}$, the parametric terms in $D_{\beta}(p\| q)$ \eqref{DPD} are empirically approximated by
\[
\ell_{\beta} (\bm{b}) \coloneqq -\frac{1}{\beta} \frac{1}{I} \sum_{i=1}^I q(\bm{u}_i ; \bm{b})^{\beta} +  \frac{1}{1+\beta} \int q(\bm{u} ; \bm{b})^{1+\beta} d\bm{u}.
\]
Here, the second term of $\ell_{\beta} (\bm{b})$ cannot be directly computed because the true distribution of $\bm{u}$ is unknown.
To address this, we introduce the following objective function $\tilde{\ell}_{\beta} (\bm{b})$ instead of $\ell_{\beta} (\bm{b})$:
\[
\tilde{\ell}_{\beta} (\bm{b}) \coloneqq  -\frac{1}{\beta} \frac{1}{I} \sum_{i=1}^I q(\bm{u}_i ; \bm{b})^{\beta} +  \frac{1}{1+\beta} \int q(\bm{u}|\theta ; \bm{b})^{1+\beta} f(\theta) d\theta d\bm{u},
\]
where $f(\theta)$ is the prior distribution of $\theta$, assumed to be standard normal. Based on the assumption of local independence, the second term of $\tilde{\ell}_{\beta} (\bm{b})$ can be expressed as
\begin{align*}
\int q(\bm{u}|\theta;\bm{b})^{1+\beta}f(\theta) d\theta d\bm{u} 
&= \int \bigg( \int \prod_{j=1}^J q(u_j|\theta;b_j)^{1+\beta} d\bm{u} \bigg) f(\theta) d\theta \\
&= \int \bigg(\prod_{j=1}^J \int  q(u_j|\theta;b_j)^{1+\beta} du_j \bigg) f(\theta) d\theta \\
&= \int \prod_{j=1}^J \left( q(1|\theta;b_j)^{1+\beta} + q(0|\theta;b_j)^{1+\beta} \right) f(\theta) d\theta.
\end{align*}
Since the final expression in the above equation does not depend on the unknown true distribution of $\bm{u}$, its computation is straightforward.
Given that the function $x \mapsto x^{1+\beta}$ is convex for $x \geq 0$, Jensen's inequality implies
\[
\tilde{\ell}_{\beta} (\bm{b}) \geq \ell_{\beta} (\bm{b}),
\]
with equality attained in the limit as $\beta \to 0$. 
Consequently, minimizing $\tilde{\ell}_{\beta} (\bm{b})$ results in a certain reduction in $\ell_{\beta} (\bm{b})$, and as $\beta \to 0$, the two minimization problems converge to equivalence. In the remainder of this subsection, we aim to minimize $\tilde{\ell}_{\beta} (\bm{b})$ instead of $\ell_{\beta} (\bm{b})$.

Next, we define the functional $\mathcal{L}_{\beta} : \mathcal{H}^I \times \mathbb{R}^J \to \mathbb{R}$ as
\[
\mathcal{L}_{\beta}(\bm{h}, \bm{b}) \coloneqq -\frac{1}{\beta}\frac{1}{I} \sum_{i=1}^I \int q(\bm{u}_i|\theta_i;\bm{b})^{\beta} h_i(\theta_i) d\theta_i + \frac{1}{1+\beta} \int q(\bm{u}|\theta ; \bm{b})^{1+\beta} f(\theta) d\theta d\bm{u}.
\]
The following discussion demonstrates that this functional serves as a majorizer for $\tilde{\ell}_{\beta} (\bm{b})$.
By considering the posterior distribution of $\theta_i$ given $\bm{u}_i$, denoted by $g(\theta_i|\bm{u}_i; \bm{b})$, and applying Bayes' theorem, we obtain
\[
q(\bm{u}_i|\theta_i; \bm{b}) = \frac{g(\theta_i|\bm{u}_i; \bm{b})q(\bm{u}_i; \bm{b})}{h_i(\theta_i)}.
\]
For a fixed $\beta$ with $0< \beta \leq 1$, the function $x \mapsto x^{\beta}$ is concave for $x \geq 0$. Thus, applying Jensen's inequality yields
\begin{align}
\mathcal{L}_{\beta}(\bm{h}, \bm{b}) - \tilde{\ell}_{\beta} (\bm{b}) &= -\frac{1}{\beta}\frac{1}{I} \sum_{i=1}^I \int q(\bm{u}_i|\theta_i;\bm{b})^{\beta} h_i(\theta_i) d\theta_i + \frac{1}{\beta} \frac{1}{I} \sum_{i=1}^I q(\bm{u}_i ; \bm{b})^{\beta} \notag \\
&= -\frac{1}{\beta}\frac{1}{I} \sum_{i=1}^I \int \left[ \frac{g(\theta_i|\bm{u}_i;\bm{b}) q(\bm{u}_i;\bm{b})}{h_i(\theta_i)} \right]^{\beta} h_i(\theta_i) d\theta_i + \frac{1}{\beta} \frac{1}{I} \sum_{i=1}^I q(\bm{u}_i ; \bm{b})^{\beta} \notag \\
&= \frac{1}{\beta}\frac{1}{I} \sum_{i=1}^I q(\bm{u}_i; \bm{b})^{\beta} \left( 1- \int \left[ \frac{g(\theta_i|\bm{u}_i;\bm{b})}{h_i(\theta_i)} \right]^{\beta} h_i(\theta_i) d\theta_i\right) \notag \\
&\geq \frac{1}{\beta}\frac{1}{I} \sum_{i=1}^I q(\bm{u}_i; \bm{b})^{\beta} \left( 1- \left[ \int \frac{g(\theta_i|\bm{u}_i;\bm{b})}{h_i(\theta_i)}  h_i(\theta_i) d\theta_i \right]^{\beta}\right) = 0. \notag
\end{align}
Thus, for any $\bm{h} \in \mathcal{H}^I$ and $\bm{b} \in \mathbb{R}^J$, we have
\begin{align}
\tilde{\ell}_{\beta} (\bm{b}) \leq \mathcal{L}_{\beta}(\bm{h}, \bm{b}). \label{aac}
\end{align}
The sufficient condition for equality is $\bm{h} = \bm{g}_{\bm{b}} \coloneqq (g(\cdot |\bm{u}_1;\bm{b}), \dots, g(\cdot |\bm{u}_I;\bm{b}))$.

The MM algorithm based on DPD can be constructed based on this property. First, set the initial value of the parameter $\bm{b}$ to $\bm{b}^{(0)}$, and for $t = 0, 1, \dots$, repeat the following two steps alternately until a convergence criterion is satisfied:
\begin{align}
\begin{aligned}
\text{majorization step:} &\quad  \bm{h}^{(t)} \in \argmin_{\bm{h} \in \mathcal{H}^I} \mathcal{L}_{\beta}(\bm{h},\bm{b}^{(t)}), \\
\text{minimization step:} & \quad  \bm{b}^{(t+1)} \in \argmin_{\bm{b}\in \mathbb{R}^J} \mathcal{L}_{\beta}(\bm{h}^{(t)},\bm{b}).
\end{aligned} \label{em_beta1}
\end{align}
Alternating between the majorization and minimization steps yields the sequence $\bm{h}^{(0)}, \bm{b}^{(1)}, \bm{h}^{(1)}, \dots$.

The equality condition in Eq.~\eqref{aac} implies that $\bm{h}^{(t)} = \bm{g}_{\bm{b}^{(t)}}$, leading to
\begin{align*}
& \mathcal{L}_{\beta}(\bm{h}^{(t)}, \bm{b}^{(t)}) = \tilde{\ell}_{\beta}(\bm{b}^{(t)}) \\
&= -\frac{1}{\beta} \frac{1}{I} \sum_{i=1}^I \int q(\bm{u}_i | \theta_i ; \bm{b}^{(t)})^{\beta} g(\theta_i|\bm{u}_i;\bm{b}^{(t)}) d\theta_i +  \frac{1}{1+\beta} \int q(\bm{u}|\theta ; \bm{b}^{(t)})^{1+\beta} f(\theta) d\theta d\bm{u}.
\end{align*}
Furthermore, the sequence $\{\bm{b}^{(t)}\}$ generated by this algorithm provides a monotonically non-increasing sequence of the objective function $\tilde{\ell}_{\beta}(\bm{b})$:
\begin{align}
\tilde{\ell}_{\beta}(\bm{b}^{(0)}) \geq \cdots \geq \tilde{\ell}_{\beta}(\bm{b}^{(t)}) \geq \tilde{\ell}_{\beta}(\bm{b}^{(t+1)}) \geq \cdots. \label{seq_beta}
\end{align}
Since $\tilde{\ell}_{\beta}(\bm{b})$ is bounded below, the sequence \eqref{seq_beta} converges to some value $\tilde{\ell}_{\beta}(\bm{\hat{b}}_{\beta})$. 
Additionally, since we have
\[
\bm{\hat{b}}_{\beta} \in \argmin_{\bm{b} \in \mathbb{R}^J} \mathcal{L}_{\beta} (\bm{g}_{\bm{b}}, \bm{\hat{b}}_{\beta}), \quad \bm{\hat{b}}_{\beta} \in \argmin_{\bm{b} \in \mathbb{R}^J} \mathcal{L}_{\beta} (\bm{g}_{\bm{\hat{b}}_{\beta}}, \bm{b}),
\]
then it follows 
\[
\left. \frac{\partial}{\partial \bm{b}} \tilde{\ell}_{\beta}(\bm{b}) \right|_{\bm{b}=\bm{\hat{b}}_{\beta}} = \left. \frac{\partial}{\partial \bm{b}} \mathcal{L}_{\beta} (\bm{g}_{\bm{b}}, \bm{\hat{b}}_{\beta}) \right|_{\bm{b}=\bm{\hat{b}}_{\beta}} + \left. \frac{\partial}{\partial \bm{b}} \mathcal{L}_{\beta} (\bm{g}_{\bm{\hat{b}}_{\beta}}, \bm{b})  \right|_{\bm{b}=\bm{\hat{b}}_{\beta}} = \bm{0},
\]
indicating that $\bm{\hat{b}}_{\beta}$ is a stationary point of $\tilde{\ell}_{\beta}(\bm{b})$.

Thus, by appropriately selecting the initial value $\bm{b}^{(0)}$, the algorithm \eqref{em_beta1} is guaranteed to converge to an estimator that sufficiently minimizes the empirical DPD. The pseudocode in Algorithm \ref{algo2} provides a formal reformulation of the MM algorithm \eqref{em_beta1}.
\begin{algorithm}
\caption{MM Algorithm Based on Density Power Divergence}
\label{algo2}
\begin{algorithmic}[1]
\State \textbf{Input: } i.i.d. sample $\{\bm{u}_1, \dots, \bm{u}_I\}$,\hspace{2mm} $\bm{b}^{(0)} \in \mathbb{R}^J$,\hspace{2mm} $0<\beta \leq 1$
\State \textbf{Initialize: } Set the initial value $\bm{b}^{(0)}$.
\Repeat
    \State \textbf{majorization step:}
    \Statex \hspace{\algorithmicindent}
    \[
    \bm{h}^{(t)} = \bm{g}_{\bm{b}^{(t)}} \coloneqq 
    \big( g(\cdot | \bm{u}_1; \bm{b}^{(t)}), \dots, g(\cdot | \bm{u}_I; \bm{b}^{(t)}) \big) \in \argmin_{\bm{h}\in \mathcal{H}^I} \mathcal{L}_{\beta}(\bm{h}, \bm{b}^{(t)}).
    \]
    \State \textbf{minimization step:}
    \Statex \hspace{\algorithmicindent} 
    \[
    \bm{b}^{(t+1)} \in \argmin_{\bm{b}\in \mathbb{R}^J} \mathcal{L}_{\beta}(\bm{g}_{\bm{b}^{(t)}}, \bm{b}).
    \]
    \State Update $t \gets t + 1$.
\Until{ The convergence criterion is satisfied.}
\State \textbf{Output: } Output the estimator $\bm{\hat{b}}_{\beta} = \bm{b}^{(t)}$.
\end{algorithmic}
\end{algorithm}

Similar to the case of DPD, the same discussion applies to $\gamma$-divergence. In the case of $\gamma$-divergence, the empirical approximation of the terms involving $q$ is given by
\[
d_{\gamma} (p \| q) \simeq -\frac{1}{\gamma} \log \left( \frac{1}{I} \sum_{i=1}^I q(\bm{u}_i ; \bm{b})^{\gamma}\right) +  \frac{1}{1+\gamma} \log \int q(\bm{u} ; \bm{b})^{1+\gamma} d\bm{u} \eqqcolon \ell_{\gamma} (\bm{b}).
\]
We introduce the following objective function $\tilde{\ell}_{\gamma}(\bm{b})$ as
\[
\tilde{\ell}_{\gamma}(\bm{b}) \coloneqq -\frac{1}{\gamma} \log \left( \frac{1}{I} \sum_{i=1}^I q(\bm{u}_i ; \bm{b})^{\gamma}\right) +  \frac{1}{1+\gamma} \log \int q(\bm{u}|\theta ; \bm{b})^{1+\gamma} f(\theta) d\theta d\bm{u} .
\]
As in the case of DPD, the objective function $\tilde{\ell}_{\gamma}(\bm{b})$ is an upper bound of the function $\ell_{\gamma}(\bm{b})$, derived from Jensen's inequality. Furthermore, we define the functional $\mathcal{L}_{\gamma} : \mathcal{H}^I \times \mathbb{R}^J \to \mathbb{R}$ as
\[
\mathcal{L}_{\gamma}(\bm{h}, \bm{b}) \coloneqq -\frac{1}{\gamma} \log \bigg( \frac{1}{I} \sum_{i=1}^I \int q(\bm{u}_i|\theta_i;\bm{b})^{\gamma} h_i(\theta_i) d\theta_i \bigg) + \frac{1}{1+\gamma} \log \int q(\bm{u}|\theta ; \bm{b})^{1+\gamma} f(\theta) d\theta d\bm{u}.
\]
Applying Jensen's inequality, it follows that for $0<\gamma \leq 1$, the functional $\mathcal{L}_{\gamma}(\bm{h}, \bm{b})$ serves as a majorizer for $\tilde{\ell}_{\gamma}(\bm{b})$. Analogous to the DPD case, an MM algorithm can be formulated, as depicted in Algorithm \ref{algo3}. The resulting estimator $\bm{\hat{b}}_{\gamma}$ is a stationary point of the objective function $\tilde{\ell}_{\gamma}(\bm{b})$.

\begin{algorithm}
\caption{MM Algorithm Based on $\gamma$-Divergence}
\label{algo3}
\begin{algorithmic}[1]
\State \textbf{Input: } i.i.d. sample $\{\bm{u}_1, \dots, \bm{u}_I\}$,\hspace{2mm} $\bm{b}^{(0)} \in \mathbb{R}^J$, \hspace{2mm} $0<\gamma \leq 1$
\State \textbf{Initialize: } Set the initial value $\bm{b}^{(0)}$.
\Repeat
    \State \textbf{majorization step:}
    \Statex \hspace{\algorithmicindent} 
    \[
    \bm{h}^{(t)} = \bm{g}_{\bm{b}^{(t)}} \coloneqq 
    \big( g(\cdot | \bm{u}_1; \bm{b}^{(t)}), \dots, g(\cdot | \bm{u}_I; \bm{b}^{(t)}) \big) \in \argmin_{\bm{h}\in \mathcal{H}^I} \mathcal{L}_{\gamma}(\bm{h}, \bm{b}^{(t)}).
    \]
    \State \textbf{minimization step:}
    \Statex \hspace{\algorithmicindent} 
    \[
    \bm{b}^{(t+1)} \in \argmin_{\bm{b}\in \mathbb{R}^J} \mathcal{L}_{\gamma}(\bm{g}_{\bm{b}^{(t)}}, \bm{b}).
    \]
    \State Update $t \gets t + 1$.
\Until{ The convergence criterion is satisfied.}
\State \textbf{Output: } Output the estimator $\bm{\hat{b}}_{\gamma} = \bm{b}^{(t)}$.
\end{algorithmic}
\end{algorithm}

\section{Properties of the Robust Estimators}\label{Properties of the Robust Estimators}

This section examines the statistical properties of the robust estimators proposed in Section \ref{Divergence-Based Robust Estimation}. First, by formulating these estimators as $M$-estimators, we establish their consistency and asymptotic normality under appropriate regularity conditions. Additionally, we theoretically show that as the hyperparameters approach zero, the estimators based on DPD and $\gamma$-divergence converge to conventional MMLE, which is based on KL divergence. These analyses provide essential insights into the trade-off between robustness and efficiency. Next, we introduce the influence function to theoretically assess the estimators' robustness against outliers. Throughout this section, $\alpha$ denotes either $\beta$ or $\gamma$, and we discuss the estimators based on DPD and $\gamma$-divergence in parallel.

\subsection{Asymptotic Properties}\label{Asymptotic Properties}

First, let $\{\bm{u}_1, \dots, \bm{u}_I\}$ be independent samples drawn from the true distribution $p$. Note that $p$ is not necessarily included in the parametric model $\{ \bm{u} \mapsto q(\bm{u}; \bm{b}) \mid \bm{b} \in \mathbb{R}^J \}$. The estimators $\bm{\hat{b}}_{\beta}$ and $\bm{\hat{b}}_{\gamma}$, introduced in Section \ref{Divergence-Based Robust Estimation}, are stationary points of the following functions, respectively:
\begin{align*}
\begin{aligned}
M^{(I)}_{\beta}(\bm{b}) &\coloneqq \mathcal{L}_{\beta}(\bm{g}_{\bm{\hat{b}}_{\beta}}, \bm{b}) = -\frac{1}{\beta} \frac{1}{I} \sum_{i=1}^I \int q(\bm{u}_i|\theta; \bm{b})^{\beta} g(\bm{u}_i|\theta; \bm{\hat{b}}_{\beta}) d\theta + \frac{1}{1+\beta} \int q(\bm{u}|\theta; \bm{b})^{1+\beta} f(\theta) d\theta d\bm{u}, \\
M^{(I)}_{\gamma}(\bm{b}) &\coloneqq \mathcal{L}_{\gamma}(\bm{g}_{\bm{\hat{b}}_{\gamma}}, \bm{b}) = -\frac{1}{\gamma} \log \left( \frac{1}{I} \sum_{i=1}^I \int q(\bm{u}_i|\theta; \bm{b})^{\gamma} g(\bm{u}_i|\theta; \bm{\hat{b}}_{\gamma}) d\theta \right) + \frac{1}{1+\gamma} \log \int q(\bm{u}|\theta; \bm{b})^{1+\gamma} f(\theta) d\theta d\bm{u}.
\end{aligned} 
\end{align*}
Next, we define the following vector-valued functions as
\setlength{\jot}{5pt}
\begin{align}
&\begin{aligned}
\psi_{\beta}(\bm{b}; \bm{u}) \coloneqq 
& - \frac{\int q(\bm{u}|\theta;\bm{b})^{1+\beta} \xi(\bm{u},\theta,\bm{b}) f(\theta)d\theta}{q(\bm{u};\bm{b}) } + \int  q(\bm{u}|\theta;\bm{b})^{1+\beta} \xi(\bm{u},\theta,\bm{b}) f(\theta) d\theta d\bm{u}, 
\end{aligned} \label{psi_beta}  \\
&\begin{aligned}
\psi_{\gamma}(\bm{b}; \bm{u}) \coloneqq & - \frac{\int q(\bm{u}|\theta;\bm{b})^{1+\gamma} \xi(\bm{u},\theta,\bm{b}) f(\theta)d\theta}{q(\bm{u};\bm{b})} \int q(\bm{u}|\theta;\bm{b})^{1+\gamma} f(\theta) d\theta d\bm{u} \\ 
& + \frac{\int q(\bm{u}|\theta;\bm{b})^{1+\gamma} f(\theta)d\theta}{q(\bm{u};\bm{b})} \int q(\bm{u}|\theta;\bm{b})^{1+\gamma} \xi(\bm{u},\theta,\bm{b}) f(\theta) d\theta d\bm{u},
\end{aligned} \label{psi_gamma} 
\end{align}
\setlength{\jot}{3pt}
where, $\xi(\bm{u}, \theta, \bm{b})$ is a $J$-dimensional vector-valued function whose $j$th component is given by
\[
D(P_j(\theta)-1)^{u_j} P_j(\theta)^{1-u_j}.
\]
For $\alpha \in \{\beta, \gamma\}$, the following equivalence holds:
\[
\nabla M^{(I)}_{\alpha}(\bm{b}) = \bm{0} \quad \Leftrightarrow \quad \Psi^{(I)}_{\alpha}(\bm{b}) \coloneqq \frac{1}{I} \sum_{i=1}^I \psi_{\alpha}(\bm{b}; \bm{u}_i) = \bm{0}.
\]
Since $\bm{\hat{b}}_{\alpha}$ is a stationary point of $M^{(I)}_{\alpha}(\bm{b})$, it follows that
\begin{align}
\Psi^{(I)}_{\alpha}(\bm{\hat{b}}_{\alpha}) = \bm{0}. \label{est_eq}
\end{align}
Let $\bm{b}^*_{\alpha}$ be the solution satisfying the following expectation-based equation:
\begin{align}
\Psi_{\alpha}(\bm{b}) \coloneqq \E [\psi_{\alpha}(\bm{b}; \bm{u})] = \bm{0}. \label{id_eq}
\end{align}
The estimator $\bm{\hat{b}}_{\alpha}$ is an $M$-estimator of $\bm{b}^*_{\alpha}$. Under appropriate regularity conditions, it can be shown that $\bm{\hat{b}}_{\alpha}$ is consistent and asymptotically normal. The following theorem is based on the discussion in van der Vaart (\cite{vandervaart2000}).
\begin{theorem}\label{thm}
    Suppose there exists a compact set $\mathcal{B}$ such that $\bm{b}^*_{\alpha} \in \mathcal{B}^{\circ}$ is the unique solution to the equation 
$\Psi_{\alpha}(\bm{b}) = \bm{0}$, where $\mathcal{B}^{\circ}$ denotes the interior of $\mathcal{B}$. Then, for any $\bm{\hat{b}}_{\alpha}$ satisfying Eq.~\eqref{est_eq}, the following holds:
\[
\bm{\hat{b}}_{\alpha} \xrightarrow{P} \bm{b}^*_{\alpha}, \quad \text{as }  \ \ I \to \infty,
\]
where $\xrightarrow{P}$ denotes convergence in probability. Furthermore, if the following matrix-valued function
\[
V_{\alpha}(\bm{b}) \coloneqq \E \biggl[ \frac{\partial \psi_{\alpha}(\bm{b}; \bm{u})}{\partial \bm{b}^{\top}} \biggr]
\]
is nonsingular at $\bm{b}^*_{\alpha}$, then $\bm{\hat{b}}_{\alpha}$ is asymptotically normal:
\begin{align}
\sqrt{I} \bigl( \bm{\hat{b}}_{\alpha} - \bm{b}^*_{\alpha} \bigr) \rightsquigarrow \mathcal{N} \Bigl( \bm{0},\, V_{\alpha}(\bm{b}^*_{\alpha})^{-1} K_{\alpha}(\bm{b}^*_{\alpha}) \bigl(V_{\alpha}(\bm{b}^*_{\alpha})^{-1} \bigr)^{\top} \Bigr), \quad \text{as } \ \ I \to \infty, \label{normality}
\end{align}
where $\rightsquigarrow$ denotes convergence in distribution and 
\[
K_{\alpha}(\bm{b}) \coloneqq \E \bigl[ \psi_{\alpha}(\bm{b}; \bm{u}) \psi_{\alpha}(\bm{b}; \bm{u})^{\top} \bigr].
\]
\end{theorem}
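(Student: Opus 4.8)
The plan is to recognize $\bm{\hat{b}}_{\alpha}$ as a standard $M$-estimator defined through the estimating equation $\Psi^{(I)}_{\alpha}(\bm{\hat{b}}_{\alpha}) = \bm{0}$ in \eqref{est_eq}, and to invoke the classical consistency and asymptotic-normality theory for such $Z$-estimators as developed in van der Vaart (\cite{vandervaart2000}, Chapter 5). The crucial structural simplification is that each response pattern $\bm{u}$ lives in the finite set $\{0,1\}^J$, so $\E[\cdot]$ is a finite weighted sum over $2^J$ atoms, and $q(\bm{u};\bm{b}) = \int q(\bm{u}\mid\theta;\bm{b}) f(\theta)\,d\theta$ is a continuous, strictly positive function of $\bm{b}$, hence bounded away from $0$ on the compact set $\mathcal{B}$. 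This finiteness renders every moment and domination condition below automatic, reducing the argument to verifying continuity and differentiability of $\bm{b}\mapsto\psi_{\alpha}(\bm{b};\bm{u})$.

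For consistency, I would first show that $\{\psi_{\alpha}(\bm{b};\cdot):\bm{b}\in\mathcal{B}\}$ is a Glivenko--Cantelli class. Because the $\theta$-integrals against the standard-normal $f$ are finite, the integrand components of $\xi$ are uniformly bounded by $D$, and the denominator $q(\bm{u};\bm{b})$ is bounded below on $\mathcal{B}$, the envelope $\sup_{\bm{b}\in\mathcal{B}}\lVert\psi_{\alpha}(\bm{b};\bm{u})\rVert$ is bounded and thus integrable. The uniform law of large numbers then gives $\sup_{\bm{b}\in\mathcal{B}}\lVert\Psi^{(I)}_{\alpha}(\bm{b})-\Psi_{\alpha}(\bm{b})\rVert\xrightarrow{P}0$. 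Continuity of $\Psi_{\alpha}$ together with the assumed uniqueness of $\bm{b}^*_{\alpha}$ on the compact $\mathcal{B}$ upgrades uniqueness to well-separation, i.e.\ $\inf_{\lVert\bm{b}-\bm{b}^*_{\alpha}\rVert\geq\varepsilon}\lVert\Psi_{\alpha}(\bm{b})\rVert>0$ for every $\varepsilon>0$. Applying the argmin/argzero consistency theorem (van der Vaart, \cite{vandervaart2000}, Theorem 5.9) and using that $\Psi^{(I)}_{\alpha}(\bm{\hat{b}}_{\alpha})=\bm{0}$ exactly, I conclude $\bm{\hat{b}}_{\alpha}\xrightarrow{P}\bm{b}^*_{\alpha}$.

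For asymptotic normality, I would apply the $Z$-estimator expansion (van der Vaart, \cite{vandervaart2000}, Theorem 5.21). Since $\psi_{\alpha}(\bm{b};\bm{u})$ is continuously differentiable in $\bm{b}$ near $\bm{b}^*_{\alpha}\in\mathcal{B}^{\circ}$ with bounded derivative, it satisfies the local Lipschitz condition $\lVert\psi_{\alpha}(\bm{b}_1;\bm{u})-\psi_{\alpha}(\bm{b}_2;\bm{u})\rVert\leq L(\bm{u})\lVert\bm{b}_1-\bm{b}_2\rVert$ with $\E[L^2]<\infty$, again by finiteness of the sample space. Combined with the consistency just established, the nonsingularity of $V_{\alpha}(\bm{b}^*_{\alpha})$, and $\Psi_{\alpha}(\bm{b}^*_{\alpha})=\bm{0}$, this yields the stochastic expansion
\[
\sqrt{I}\bigl(\bm{\hat{b}}_{\alpha}-\bm{b}^*_{\alpha}\bigr) = -\,V_{\alpha}(\bm{b}^*_{\alpha})^{-1}\,\sqrt{I}\,\Psi^{(I)}_{\alpha}(\bm{b}^*_{\alpha}) + o_P(1).
\]
The multivariate central limit theorem applied to the i.i.d.\ mean $\Psi^{(I)}_{\alpha}(\bm{b}^*_{\alpha})$, whose summands have mean $\bm{0}$ and finite covariance $K_{\alpha}(\bm{b}^*_{\alpha})$, gives $\sqrt{I}\,\Psi^{(I)}_{\alpha}(\bm{b}^*_{\alpha})\rightsquigarrow\mathcal{N}(\bm{0},K_{\alpha}(\bm{b}^*_{\alpha}))$, and Slutsky's lemma then delivers the sandwich form \eqref{normality}.

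I expect the main obstacle to be bookkeeping rather than conceptual: carefully justifying differentiation under the $\theta$- and $\bm{u}$-integrals (a routine dominated-convergence argument given the bounded, smooth integrands), confirming that the computed Jacobian coincides with $V_{\alpha}(\bm{b}^*_{\alpha})=\E[\partial\psi_{\alpha}/\partial\bm{b}^{\top}]$, and verifying that $q(\bm{u};\bm{b})$ stays bounded away from zero so that $\psi_{\alpha}$ and its derivative remain square-integrable on $\mathcal{B}$. All of these hinge on the compactness of $\mathcal{B}$ and the strict positivity and continuity of $q(\bm{u};\bm{b})$; once these bounds are in place, the regularity hypotheses of the cited theorems are met and both conclusions follow.
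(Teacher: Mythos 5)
Your proposal is correct and follows essentially the same route as the paper's proof: uniform convergence of $\Psi^{(I)}_{\alpha}$ to $\Psi_{\alpha}$ via a Glivenko--Cantelli argument, well-separation of the unique zero on the compact set $\mathcal{B}$ for consistency, and verification of the Lipschitz and second-moment hypotheses of van der Vaart's Theorem 5.21 (rendered trivial by the finiteness of $\{0,1\}^J$ and the compactness of $\mathcal{B}$) for the sandwich-form asymptotic normality. The only cosmetic difference is that you invoke Theorem 5.9 where the paper cites Theorem 5.7 of van der Vaart; for a $Z$-estimator defined by an estimating equation your citation is arguably the more apt one.
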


\begin{proof}
First, by the Glivenko-Cantelli theorem (as presented in Chapter 19 of van der Vaart, \cite{vandervaart2000}), the following uniform convergence holds:
\[
\sup_{\bm{b} \in \mathcal{B}} \big\| \Psi^{(I)}_{\alpha}(\bm{b}) - \Psi_{\alpha}(\bm{b}) \big\| \xrightarrow{P} 0, \quad \text{as} \ \ I \to \infty,
\]
where $\|\cdot\|$ denotes the $L^2$-norm. Additionally, by the uniqueness of $\bm{b}^*_{\alpha}$ and the compactness of $\mathcal{B}$, for any $\epsilon > 0$, the following holds:
\[
\inf_{\bm{b} \in \mathcal{B} : \|\bm{b} - \bm{b}^*_{\alpha}\| \geq \epsilon} \|\Psi_{\alpha}(\bm{b})\| > 0 = \|\Psi_{\alpha}(\bm{b}^*_{\alpha})\|.
\]
Therefore, by Theorem 5.7 of van der Vaart (\cite{vandervaart2000}), we obtain
\[
\bm{\hat{b}}_{\alpha} \xrightarrow{P} \bm{b}^*_{\alpha}, \quad \text{as} \ \ I \to \infty.
\]

Next, to establish the asymptotic normality of the estimator $\bm{\hat{b}}_{\alpha}$, we verify the assumptions of Theorem 5.21 in van der Vaart (\cite{vandervaart2000}). First, by the multivariate Taylor theorem, there exists $\eta \in [0, 1]$, and $\bm{\tilde{b}} = \eta \bm{b}_1 + (1-\eta) \bm{b}_2$, such that
\begin{align}
\psi_{\alpha} (\bm{b}_1; \bm{u}) - \psi_{\alpha} (\bm{b}_2; \bm{u}) = V_{\alpha}(\bm{\tilde{b}}; \bm{u})(\bm{b}_1 - \bm{b}_2), \label{taylor}
\end{align}
where $V_{\alpha}(\bm{b}; \bm{u})$ is the Jacobian matrix of the function $\bm{b} \mapsto \psi_{\alpha}(\bm{b}; \bm{u})$:
\begin{equation}
V_{\alpha}(\bm{b}; \bm{u}) \coloneqq \frac{\partial \psi_{\alpha}(\bm{b}; \bm{u})}{\partial \bm{b}^{\top}}. \label{V_alpha}
\end{equation}
By the continuity of $V_{\alpha}(\bm{b}; \bm{u})$ with respect to $\bm{b}$ and the compactness of $\mathcal{B}$, we obtain
\begin{align}
\bar{V}_{\alpha} \coloneqq \sup_{\bm{b} \in \mathcal{B}, \bm{u} \in \{0,1\}^J} \|V_{\alpha}(\bm{b}; \bm{u}) \|_{\mathrm{F}} < \infty, \label{J}
\end{align}
where $\|\cdot\|_{\mathrm{F}}$ is the Frobenius norm. Using Eq.~\eqref{taylor} and \eqref{J}, the following inequality holds for any $\bm{u} \in \{0,1\}^J$:
\[
\big\| \psi_{\alpha} (\bm{b}_1; \bm{u}) - \psi_{\alpha} (\bm{b}_2; \bm{u}) \big\| \leq \bar{V}_{\alpha} \|\bm{b}_1 - \bm{b}_2\|.
\]
Furthermore, since $\psi_{\alpha} (\bm{b}^*_{\alpha}; \bm{u})$ is finite for any $\bm{u} \in \{0,1\}^J$, the following holds:
\[
\E \big[ \|\psi_{\alpha} (\bm{b}^*_{\alpha}; \bm{u}) \|^2 \big] < \infty.
\]
Combining these facts, the assumptions of Theorem 5.21 in van der Vaart (\cite{vandervaart2000}) are satisfied, establishing the asymptotic normality of the estimator $\bm{\hat{b}}_{\alpha}$. That is,
\[
\sqrt{I} \bigl( \bm{\hat{b}}_{\alpha} - \bm{b}^*_{\alpha} \bigr) \rightsquigarrow \mathcal{N} \Bigl(\bm{0},\, V_{\alpha}(\bm{b}^*_{\alpha})^{-1} K_{\alpha}(\bm{b}^*_{\alpha}) \big(V_{\alpha}(\bm{b}^*_{\alpha})^{-1} \big)^{\top} \Bigr), \quad \text{as } \ \ I \to \infty.
\]
\end{proof}

By taking the limit as $\alpha \to 0$, the above discussion converges to the framework of marginal maximum likelihood estimation (MMLE). By reasoning in the same way as in Section \ref{Divergence-Based Robust Estimation}, it follows that the marginal maximum likelihood estimator $\bm{\hat{b}}_{\text{KL}}$ is a stationary point of the function:
\[
M_{\text{KL}}^{(I)}(\bm{b}) \coloneqq -\frac{1}{I} \sum_{i=1}^I \int \log q(\bm{u}_i | \theta_i; \bm{b}) g(\theta |\bm{u}_i; \bm{\hat{b}}_{\text{KL}}) \, d\theta,
\]
Therefore, $\bm{\hat{b}}_{\text{KL}}$ satisfy the following estimating equation:
\[
\frac{1}{I} \sum_{i=1}^I \psi_{\text{KL}}(\bm{\hat{b}}_{\text{KL}}; \bm{u}_i) = \bm{0},
\]
where 
\[
\psi_{\text{KL}}(\bm{b}; \bm{u}) \coloneqq -\frac{\int q(\bm{u} | \theta; \bm{b}) \xi(\bm{u}, \theta, \bm{b}) f(\theta) d\theta}{q(\bm{u}; \bm{b})}.
\]
We can verify that, for $\alpha \in \{\beta, \gamma\}$, the following limits hold:
\[
\lim_{\alpha \to 0} \psi_{\alpha}(\bm{b}; \bm{u}) = \psi_{\text{KL}}(\bm{b}; \bm{u}) \quad \text{and} \quad
\lim_{\alpha \to 0} V_{\alpha}(\bm{b}; \bm{u}) = V_{\text{KL}}(\bm{b}; \bm{u}),
\]
where $V_{\alpha}(\bm{b}; \bm{u})$ is defined in Eq.~\ref{V_alpha}, and $V_{\text{KL}}(\bm{b}; \bm{u})$ is the Jacobian matrix of the function $\bm{b} \mapsto \psi_{\text{KL}}(\bm{b}; \bm{u})$. The explicit expressions for $V_{\alpha}(\bm{b}; \bm{u})$ and $V_{\text{KL}}(\bm{b}; \bm{u})$ are presented in Appendix \ref{appendixA}. It can be confirmed that the asymptotic variance of the estimator $\hat{\bm{b}}_{\alpha}$ obtained in Theorem \ref{thm} as $\alpha\to 0$ coincides with the sandwich-type covariance matrix for the marginal maximum likelihood estimator derived by Yuan et al.~(\cite{yuan2014}).

Furthermore, the convergence of $\hat{\bm{b}}_{\alpha}$ to $\hat{\bm{b}}_{\text{KL}}$ as $\alpha \to 0$ under appropriate regularity conditions can be established as follows. For each $i \in \{1, \dots, I\}$, applying the multivariate Taylor theorem yields the existence of some $\eta_i \in [0,1]$ and an intermediate point $\bm{\tilde{b}}_i = \eta_i \hat{\bm{b}}_{\text{KL}} + (1-\eta_i) \hat{\bm{b}}_{\alpha}$ such that  
\[
\psi_{\alpha} (\hat{\bm{b}}_{\text{KL}}; \bm{u}_i) - \psi_{\alpha} (\hat{\bm{b}}_{\alpha}; \bm{u}_i) = V_{\alpha}(\bm{\tilde{b}}_i; \bm{u}_i)(\hat{\bm{b}}_{\text{KL}} - \hat{\bm{b}}_{\alpha}).
\]
Summing over $i=1,\dots,I$ leads to  
\[
\sum_{i=1}^I \psi_{\alpha} (\hat{\bm{b}}_{\text{KL}}; \bm{u}_i) = \sum_{i=1}^I V_{\alpha}(\bm{\tilde{b}}_i; \bm{u}_i)(\hat{\bm{b}}_{\text{KL}} - \hat{\bm{b}}_{\alpha}).
\]
Assuming that the matrix $\sum_{i=1}^I V_{\alpha}(\bm{\tilde{b}}_i; \bm{u}_i)$ is nonsingular for any sufficiently small $\alpha \geq 0$, it follows that  
\begin{align*}
\big\| \hat{\bm{b}}_{\text{KL}} - \hat{\bm{b}}_{\alpha} \big\| =\ & \bigg\| \sum_{i=1}^I V_{\alpha}(\bm{\tilde{b}}_i; \bm{u}_i) \bigg\|_{\mathrm{F}}^{-1} \bigg\| \sum_{i=1}^I \psi_{\alpha} (\hat{\bm{b}}_{\text{KL}}; \bm{u}_i) \bigg\| \\
\to \ & \bigg\| \sum_{i=1}^I V_{\text{KL}}(\bm{\tilde{b}}_i; \bm{u}_i) \bigg\|_{\mathrm{F}}^{-1} \bigg\| \sum_{i=1}^I \psi_{\text{KL}} (\hat{\bm{b}}_{\text{KL}}; \bm{u}_i) \bigg\| 
= 0, \quad \text{as} \quad \alpha \to 0
\end{align*}
This establishes the pointwise convergence:
\[
\lim_{\alpha \to 0} \hat{\bm{b}}_{\alpha} = \hat{\bm{b}}_{\text{KL}}.
\]
This connection highlights an essential property of the proposed robust estimation framework: as the hyperparameters approach zero, the estimators based on DPD or $\gamma$-divergence converge to the marginal maximum likelihood estimator. Consequently, the framework extends the conventional approach and facilitates a seamless transition between the proposed and existing estimation methods.

It is noteworthy that the $j$th element of $\psi_{\text{KL}}(\bm{b}; \bm{u})$ is given by
\begin{equation}
 \frac{\int q(u_j|\theta; b_j)(P_j(\theta)-1)^{1-u_j}P_j(\theta)^{1-u_j}f(\theta) d\theta}{\int q(u_j|\theta;b_j)f(\theta)d\theta}, \label{mmle_eq}
\end{equation}
which depends solely on $b_j$. This implies that, under MMLE, the difficulty parameter $b_j$ is exclusively estimated based on the responses to item $j$. In other words, the estimator for item $j$ contains only the ``information'' derived from responses to the specific item $j$. On the other hand, the estimators based on DPD or $\gamma$-divergence incorporate ``information'' regarding how the respondent has answered other items. This distinction contributes significantly to the robustness of the proposed estimation approaches, as they utilize the broader context of response patterns across all items rather than focusing solely on individual item responses. It should be noted that this property also leads to an increase in computational complexity.

\subsection{Influence Function}\label{Influence Function}
In this subsection, we introduce the influence function to theoretically evaluate the robustness of the estimators proposed in Section \ref{Divergence-Based Robust Estimation}. The influence function is an essential tool for assessing the robustness and sensitivity of statistics (Hampel et al., \cite{hampel1986}). In particular, it is used to quantitatively analyze the extent to which a statistic is affected by outliers or anomalous data. The influence function provides a differential representation of how a statistic $T(p)$ responds to changes in the underlying distribution $p$, offering a key theoretical foundation for evaluating the properties of statistics.

The influence function is generally defined as follows. For a population distribution $p$, the influence function at a point $x \in \mathcal{X}$ is defined as
\begin{align}
\mathrm{IF}(x; T, p) = \lim_{\epsilon \to 0} \frac{T((1 - \epsilon)p + \epsilon \Delta_x) - T(p)}{\epsilon}. \label{if}
\end{align}
Here, $\Delta_x$ denotes a probability measure with mass one at point $x$. This definition illustrates how the statistic $T$ changes when the data distribution is slightly perturbed.

Gross-error sensitivity stands out as the key criterion for evaluating the robustness of statistics using the influence function, as outlined in Hampel et al.~(\cite{hampel1986}). The gross-error sensitivity $\kappa^*(T;p)$ measures the maximum impact that a single observation can have on $T(p)$ and is defined as follows:
\[
\kappa^*(T; p) \coloneqq \sup_{x \in \mathcal{X}} \| \mathrm{IF}(x; T,p) \|.
\]
The gross-error sensitivity quantifies the worst-case (approximate) influence of fixed-sized small contamination in the data distribution on the value of the statistic $T(p)$. Thus, it can be interpreted as the upper bound of the (standardized) asymptotic bias of the statistic. The smaller the value of $\kappa^*$, the less sensitive the statistic is to outliers, allowing for more stable estimation.

When a statistic is defined as an $M$-estimator that solves an equation system like \eqref{est_eq},
\[
\frac{1}{I} \sum_{i=1}^I \psi_{\alpha}(\bm{b}; \bm{u}_i) = \bm{0},
\]
the influence function of the estimator $\bm{\hat{b}}_{\alpha}$ at a point $\bm{u} \in \{0,1\}^J$ for the population distribution $p$ can be expressed as follows (Hampel et al., \cite{hampel1986}):
\begin{align}
\mathrm{IF}(\bm{u}; \bm{\hat{b}}_{\alpha}, p) = -V_{\alpha}(\bm{\hat{b}}_{\alpha})^{-1} \psi_{\alpha}(\bm{\hat{b}}_{\alpha}, \bm{u}). \label{influence_function}
\end{align}
The influence functions of the estimators based on marginal maximum likelihood, DPD, and $\gamma$-divergence are numerically analyzed and discussed in Section \ref{Analysis Based on Influence Functions}.

\section{Numerical Analysis}\label{Numerical Analysis}
This section empirically evaluates and validates the robust estimation methods proposed in Section \ref{Divergence-Based Robust Estimation}. First, in Section \ref{Simulation}, we perform extensive simulation studies under diverse conditions, including scenarios featuring guessing responses that deviate from the 1PLM. These studies compare the proposed methods with conventional MMLE and robust MMLE introduced by Hong and Cheng (\cite{hong2019}). Robust MMLE is an estimation method designed to mitigate bias and downweight the influence of outliers by maximizing a weighted marginal likelihood, where the weights are derived from standardized individual fit statistics. In Section \ref{Analysis Based on Influence Functions}, we investigate the influence functions of individual observations, providing further insights into the practical performance of the proposed robust estimation methods. To ensure reproducibility, the implementation R-code for the proposed methods is available on GitHub (\url{https://github.com/yukiitaya/Robust_IRT.git}).

\subsection{Simulation}\label{Simulation}
\subsubsection{Setup}\label{Setup}
In this simulation, the responders' ability parameters $\theta_1,\dots,\theta_I$ are independently generated from the standard normal distribution. The item difficulty parameters are set by evenly dividing the interval $[-2, 2]$ into $J$ partitions, ensuring coverage of a broad range of ability levels. We consider two sample sizes, $I=500$ and $I=1000$, and two test lengths, $J=15$ and $J=30$. Given these conditions, the typical responses are generated according to 1PLM in Eq.~\eqref{q}.

Aberrant responses are introduced into the data by incorporating two types of guessing behavior: an Unbiased type (the probability of 0 or 1 is equally 0.5) and a Biased type (the probabilities of 0 and 1 are 0.8 and 0.2, respectively). The Unbiased type simulates completely random responses on binary-choice items, while the Biased type represents a scenario akin to random responses on five-choice items.

Regarding the occurrence of guessing responses, we examine three scenarios:
\begin{itemize}
\item \textbf{Scenario 1}: The probability of guessing is uniform with respect to $\theta$.
\item \textbf{Scenario 2}: The probability of guessing depends on $\theta$.
\item \textbf{Scenario 3}: No guessing takes place.
\end{itemize}

In Scenario 1, as outlined by Hong and Cheng (\cite{hong2019}), varying proportions of aberrant responses are simulated by adjusting two key parameters:
\begin{itemize}
\item Prevalence (Prev.): the proportion of individuals whose responses are affected by guessing.
\item Severity (Sev.): the proportion of items for which these individuals provide random responses.
\end{itemize}
Let $Z_{\text{Prev.}},\, Z_{\text{Sev.}}$ and $Z_r$ be independent random variables, each following a Bernoulli distribution with parameters Prev., Sev., and $r$, respectively. Then, for each respondent $i$, the observed response pattern $\bm{u}'_i$ is formally expressed as 
\[
\bm{u}'_i = (1-Z_{\text{Prev.}})\, \bm{u}_i + Z_{\text{Prev.}}\, \bm{\tilde{u}}_i,
\]
where $\bm{u}_i=(u_{i1},\dots,u_{iJ})$ represents a normal response pattern following 1PLM, and $\bm{\tilde{u}}_i = (\tilde{u}_{i1}, \dots, \tilde{u}_{iJ})$ denotes a response pattern contaminated by guessing:
\[
\tilde{u}_{ij} = (1-Z_{\text{Sev.}})\, u_{ij} + Z_{\text{Sev.}}\, Z_r, \quad j=1,\dots,J.
\]
Here, $Z_r$ represents the guessing response. For the Unbiased type, $r=0.5$, whereas for the Biased type, $r=0.2$. Once a responder is flagged as a guesser, the probability for each item is shifted from the 1PLM-based value $u_{ij}$ toward the constant guess parameter $Z_r$, with the severity of this shift controlled by $Z_{\text{Sev.}}$. In this simulation, we examine combinations of $\text{Prev.}\in \{0.1, 0.3\}$ and $\text{Sev.}\in \{0.3, 0.5\}$. For example, when $(\text{Prev.}, \text{Sev.})=(0.1, 0.3)$, randomly selected 10\% of responders guess on 30\% of their items.

In Scenario 2, the probability of guessing for a responder with ability $\theta$ is determined by one of the following two functions:
\[
R_1(\theta)=\frac{1}{1+\exp(0.5(\theta+5))}, \quad R_2(\theta)=\frac{1}{1+\exp(1.5(\theta+2))}.
\]
These functions are illustrated in Figure \ref{fig:r_function}. The function $R_1(\theta)$ captures a moderate dependency where guessing gradually increases across a wide ability range. By contrast, $R_2(\theta)$ represents a sharper increase in guessing for lower ability levels. 
In this scenario, the response pattern $\bm{u}'_i=(u'_{i1},\dots,u'_{iJ})$ for an individual with ability $\theta_i$ is
\[
u'_{ij} = (1-Z_{R_{l}(\theta_i)})\, u_{ij} + Z_{R_{l}(\theta_i)}\, Z_r, \quad j=1,\dots,J, \quad l=1,2.
\]
\begin{figure}[htbp]
\centering
\includegraphics[width=0.8\textwidth]{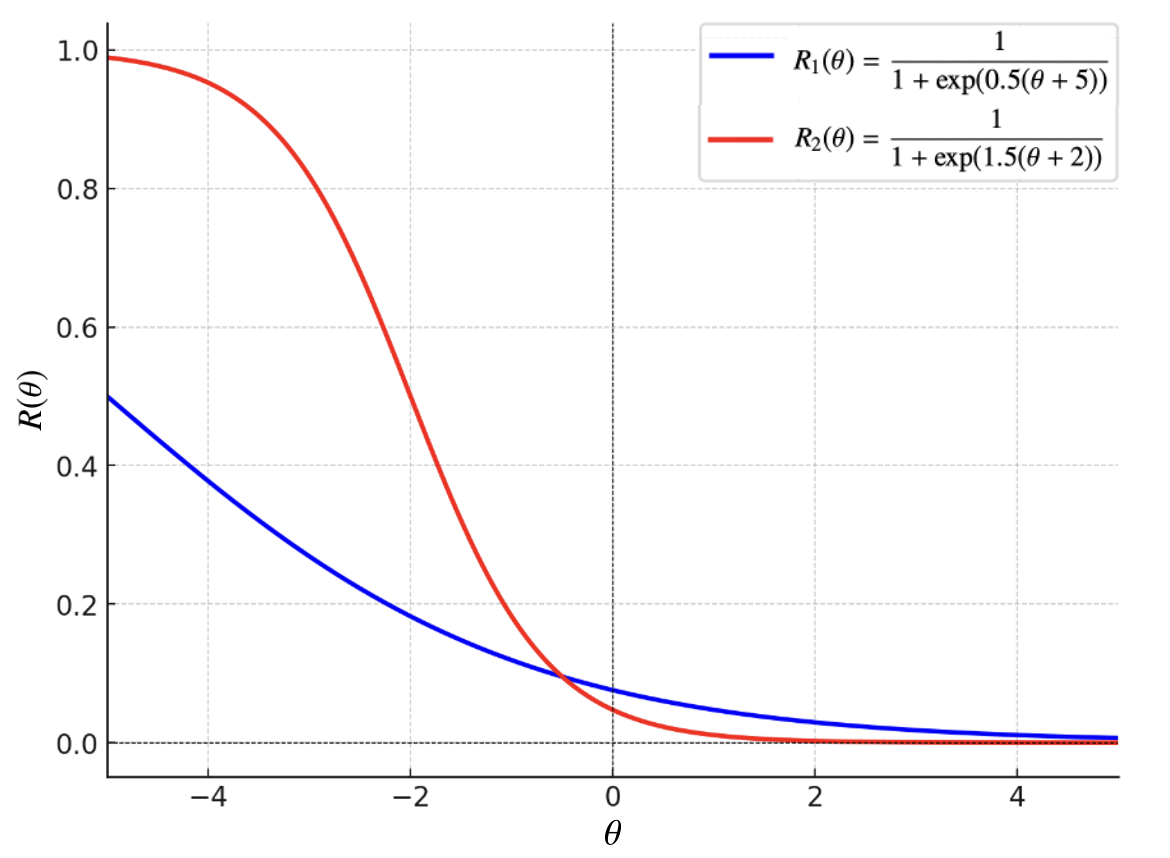} 
\caption{Ability-dependent probabilities of guessing $R(\theta)$}
\label{fig:r_function} 
\end{figure}
Table \ref{tab:careless_response} shows the number of guessers and the total guessed responses observed under each random response generation mechanism when $I=1000$ and $J=15$. For the ability-dependent mechanisms $R_1$ and $R_2$, the results are averaged over 1000 repetitions.

\begin{table}[htbp]
\centering
\setlength{\tabcolsep}{8pt}
\renewcommand{\arraystretch}{1.2}
\caption{Number of guessers and total guessed responses under each condition ($I=1000,\, J=15$)}
\vspace{2mm}
\begin{tabular}{ccc}
\toprule
Guessing Mechanism & Number of Guessers & Total Guessed Responses \\
\midrule
$(\text{Prev.}, \text{Sev.})=(0.1, 0.3)$ & 100 & 450 \\
$(\text{Prev.}, \text{Sev.})=(0.1, 0.5)$ & 100 & 750 \\
$(\text{Prev.}, \text{Sev.})=(0.3, 0.3)$ & 300 & 1350 \\
$(\text{Prev.}, \text{Sev.})=(0.3, 0.5)$ & 300 & 2250 \\
$R_1(\theta)=\frac{1}{1+\exp(0.5(\theta+5))}$ & 954 & 1273 \\
$R_2(\theta)=\frac{1}{1+\exp(1.5(\theta+2))}$ & 585 & 1457 \\
\bottomrule
\end{tabular}
\label{tab:careless_response}\\
\vspace{2mm}
The ability-dependent guessing $(R_1, R_2)$ results represent averages over 1000 repetitions.
\end{table}

Scenario 3 models an ideal condition without guessing, where all responses are consistent with the 1PLM. The estimation methods are assessed under conditions where $I=500,1000,2000,5000$ and $J=15,30,45$.

In each scenario, the item difficulty parameters are estimated using the robust DPD- and $\gamma$-divergence-based methods, standard MMLE and robust MMLE proposed by Hong and Cheng (\cite{hong2019}). Estimation is performed in the R environment (R Core Team, \cite{r2024}). To implement the proposed methods, the Gauss-Hermite quadrature was configured with 21 nodes, and the initial parameter values were set to $\bm{b}^{(0)}=\bm{0}$. The convergence criterion was defined as
\[
\max_{j} |b_j^{(t+1)}-b_j^{(t)}| < 0.0001.
\]
For the implementation of both standard MMLE and robust MMLE methods, the `mirt' package (Chalmers, \cite{chalmers2012}) was utilized, with default settings applied for all detailed configurations. 

We employ two metrics to evaluate estimation accuracy: bias and root-mean-square error (RMSE):
\[
\text{Bias} = \frac{1}{J} \sum_{j=1}^J \hat{b}_j - b_j, \quad 
\text{RMSE} = \sqrt{\frac{1}{J} \sum_{j=1}^J (\hat{b}_j - b_j)^2}.
\]
We repeated the estimation process 1,000 times for each case with each method and reported the average values of bias and RMSE. In comparing estimation methods, we emphasize RMSE as the primary metric because it provides a comprehensive measure of prediction accuracy by accounting for both the estimates' variance and bias.

\subsubsection{Results and Discussion}\label{Results and Discussion}
First, the results for Scenario 1---where guessing probability was uniform across the ability range---were displayed in Tables~\ref{random} and \ref{biased}. Table~\ref{random} summarized the Unbiased-type responses, while Table~\ref{biased} focused on the Biased-type responses. Overall, the proposed robust estimation methods effectively mitigated the adverse impact of guessing on estimation accuracy, demonstrating superior performance compared to both standard MMLE and robust MMLE proposed by Hong and Cheng (\cite{hong2019}).

\begin{table}[hbtp]
\centering
\fontsize{9pt}{10.5pt}\selectfont
\setlength{\tabcolsep}{6.1pt}
\renewcommand{\arraystretch}{1.2}
\caption{Bias and RMSE of each method in Unbiased type of Scenario 1 (the probability of guessing is uniform with respect to $\theta$): averages over 1000 iterations}
\vspace{1mm}
\begin{tabular}{rrrr|rrrrrrrrr}
\toprule
\multicolumn{4}{c}{Settings} & &  \raisebox{-1ex}{\multirow{2}{*}{MMLE}} &
\multicolumn{3}{c}{DPD} & \multicolumn{3}{c}{$\gamma$-$\text{divergence}$} & \raisebox{-1ex}{\multirow{2}{*}{\makecell{\vspace{0.2mm} Robust \\ MMLE}}} \\
\cmidrule(lr){1-4}  \cmidrule(lr){7-9} \cmidrule(lr){10-12}  
\multicolumn{1}{c}{Prev.} & \multicolumn{1}{c}{Sev.} & \multicolumn{1}{c}{$I$} & \multicolumn{1}{c}{$J$} &  &  & \multicolumn{1}{c}{0.1} & \multicolumn{1}{c}{0.3} & \multicolumn{1}{c}{0.5} &  \multicolumn{1}{c}{0.1} & \multicolumn{1}{c}{0.3} & \multicolumn{1}{c}{0.5} &   \\
\midrule
0.1 & 0.3 & 500 & 15 & Bias  & \textbf{0.004} & \textbf{0.004} & \textbf{0.004} & \textbf{0.004} & \textbf{0.004} & \textbf{0.004} & \textbf{0.004} & \textbf{0.004}   \\
\rowcolor{mygray!80} &  &  &  & RMSE & 0.125 & 0.094 & 0.097 & 0.115 & \textbf{0.093} & 0.096 & 0.119 & 0.125    \\
 &  &  & 30 & Bias  & 0.003 & \textbf{0.001} & $-0.006$ & $-0.013$ & \textbf{0.001} & $-0.006$ & $-0.014$ & $-0.002$  \\
\rowcolor{mygray!80}  & & & & RMSE & 0.129 & 0.095 & 0.129 & 0.208 & \textbf{0.094} & 0.133 & 0.215 & 0.236 \\
 &  & 1000 & 15 & Bias & $-0.003$ & $-0.004$ & $-0.003$ & $\bm{-0.001}$ & $-.0004$ & $-0.003$ & $-0.002$ & $-0.006$  \\
\rowcolor{mygray!80} &  &  &  & RMSE & 0.117 & 0.075 & 0.070 & 0.081 & 0.075 & \textbf{0.068} & 0.083 & 0.332   \\
 & &  & 30 & Bias & $-0.003$ & $-0.003$ & \textbf{0.002} & 0.006 & $-0.003$ & \textbf{0.002} & 0.007 & $-0.008$   \\
\rowcolor{mygray!80} & &  &  & RMSE & 0.120 & 0.073 & 0.089 & 0.137 & \textbf{0.072} & 0.091 & 0.141 & 0.218   \\
\hline
 & 0.5 & 500 & 15 & Bias  & 0.002 & 0.001 & \textbf{0.000} & $-0.001$ & 0.001 & \textbf{0.000} & $-0.002$ & 0.001   \\
\rowcolor{mygray!80} &  &  &  & RMSE & 0.198 & 0.113 & 0.100 & 0.117 & 0.112 & \textbf{0.098} & 0.121 & 0.308    \\
 &  &  & 30 & Bias & 0.001 & 0.002 & $-0.002$ & $-0.006$ & 0.002 & $-0.002$ & $-0.007$ & \textbf{0.000}   \\
\rowcolor{mygray!80} &  &  &  & RMSE & 0.178 & 0.099 & 0.128 & 0.204 & \textbf{0.097} & 0.131 & 0.210 & 0.219  \\ 
 &  & 1000 & 15 & Bias & $-0.005$ & $-.0005$ & $-0.003$ & $\bm{-0.001}$ & $-0.005$ & $-0.004$ & $-0.002$ & $-0.008$  \\
\rowcolor{mygray!80} &  &  &  & RMSE & 0.194 & 0.097 & 0.071 & 0.081 & 0.095 & \textbf{0.069} & 0.084 & 0.294   \\
 & &  & 30 & Bias & $-0.005$ & $-0.004$ & \textbf{0.000} & 0.003 & $-0.004$ & \textbf{0.000} & 0.004 & $-0.009$   \\
\rowcolor{mygray!80} & &  &  & RMSE & 0.175 & 0.078 & 0.089 & 0.137 & \textbf{0.076} & 0.092 & 0.141 & 0.202   \\
\hline
0.3 & 0.3 & 500 & 15 & Bias  & \textbf{0.001} & $-\bm{0.001}$ & $-0.006$ & $-0.009$ & $-0.002$ & $-0.007$ & $-0.011$ & $-0.005$   \\
\rowcolor{mygray!80} &  &  &  & RMSE & 0.275 & 0.181 & 0.132 & 0.130 & 0.178 & \textbf{0.118} & 0.128 & 0.189    \\
 &  &  & 30 & Bias & \textbf{0.000} & $-0.005$ & $-0.012$ & $-0.017$ & $-0.005$ & $-0.016$ & $-0.023$ & $-0.007$   \\
\rowcolor{mygray!80} &  &  &  & RMSE & 0.283 & 0.171 & 0.141 & 0.223 & 0.165 & 0.145 & 0.248 & \textbf{0.125}  \\ 
 &  & 1000 & 15 & Bias & $-0.007$ & $-0.009$ & $-0.009$ & $\bm{-0.006}$ & $-0.009$ & $-0.010$ & $-0.008$ & $-0.017$  \\
\rowcolor{mygray!80} &  &  &  & RMSE & 0.274 & 0.172 & 0.110 & 0.096 & 0.169 & 0.092 & \textbf{0.091} & 0.173   \\
 & &  & 30 & Bias & $-0.008$ & $-0.010$ & \textbf{0.002} & 0.016 & $-0.010$ & 0.003 & 0.024 & $-0.023$   \\
\rowcolor{mygray!80} & &  &  & RMSE & 0.282 & 0.159 & \textbf{0.099} & 0.155  & 0.153 & 0.100 & 0.174 & 0.102   \\
\hline
 & 0.5 & 500 & 15 & Bias & $\bm{-0.002}$ & $-0.005$ & $-0.009$ & $-0.011$ & $-0.005$ & $-0.011$ & $-0.015$ & $-0.008$  \\
\rowcolor{mygray!80} &  &  &  & RMSE & 0.442 & 0.277 & 0.154 & 0.140 & 0.272 & 0.130 & 0.136 & \textbf{0.119}    \\
 &  &  & 30 & Bias & $\bm{-0.002}$ & $-0.006$ & $-0.013$ & $-0.017$ & $-0.007$ & $-0.018$ & $-0.023$ & $-0.008$   \\
\rowcolor{mygray!80} &  &  &  & RMSE & 0.400 & 0.201 & 0.142 & 0.224 & 0.193 & 0.149 & 0.249 & \textbf{0.107}  \\ 
 &  & 1000 & 15 & Bias & $-0.011$ & $-0.014$ & $-0.013$ & $\bm{-0.007}$ & $-0.015$ & $-0.016$ & $-0.011$ & $-0.025$  \\
\rowcolor{mygray!80} &  &  &  & RMSE & 0.441 & 0.269 & 0.133 & 0.106 & 0.264 & 0.102 & 0.098 & \textbf{0.096}   \\
 & &  & 30 & Bias & $-0.011$ & $-0.013$ & \textbf{0.001} & 0.016 & $-0.014$ & 0.002 & 0.025 & $-0.029$   \\
\rowcolor{mygray!80} & &  &  & RMSE & 0.401 & 0.191 & 0.101 & 0.159 & 0.181 & 0.106 & 0.181 & \textbf{0.085}   \\
\bottomrule
\end{tabular}
\begin{flushleft}
\begingroup
\fontsize{9pt}{10.5pt}\selectfont
$\ \bullet \ \ \ $Prev.: Prevalence,\,\, Sev.: Severity,\,\, $I$: Sample size,\,\, $J$: Test length\\
$\ \bullet \ \ \ $MMLE: Marginal maximum likelihood estimation.\\
$\ \bullet \ \ \ $DPD and $\gamma$-divergence: The proposed methods (the second row of the table provides the hyperparameter values).\\
$\ \bullet \ \ \ $Robust MMLE: Robust marginal maximum likelihood estimation by Hong and Cheng (\cite{hong2019}).\\[2pt]
The bias and RMSE values closest to 0 are highlighted in bold for each scenario.
\endgroup
\end{flushleft}
\label{random}
\end{table}

\begin{table}[hbtp]
\fontsize{9pt}{10.5pt}\selectfont
\centering
\setlength{\tabcolsep}{6.3pt}
\renewcommand{\arraystretch}{1.2}
\caption{Bias and RMSE of each method in Biased type of Scenario 1 (the probability of guessing is uniform with respect to $\theta$): averages over 1000 iterations}
\vspace{1mm}
\begin{tabular}{cccc|ccccccccc}
\toprule
\multicolumn{4}{c}{Settings} & &  \raisebox{-1ex}{\multirow{2}{*}{MMLE}} &
\multicolumn{3}{c}{ DPD} & \multicolumn{3}{c}{$\gamma$-$\text{divergence}$} & \raisebox{-1ex}{\multirow{2}{*}{\makecell{\vspace{0.2mm} Robust \\ MMLE}}} \\
\cmidrule(lr){1-4}  \cmidrule(lr){7-9} \cmidrule(lr){10-12}  
\multicolumn{1}{c}{Prev.} & \multicolumn{1}{c}{Sev.} & \multicolumn{1}{c}{$I$} & \multicolumn{1}{c}{$J$} &  &  & \multicolumn{1}{c}{\hspace{0.5mm} 0.1} & \multicolumn{1}{c}{\hspace{0.5mm} 0.3} & \multicolumn{1}{c}{0.5} &  \multicolumn{1}{c}{\hspace{1mm} 0.1} & \multicolumn{1}{c}{\hspace{0.5mm} 0.3} & \multicolumn{1}{c}{0.5} &   \\
\midrule
0.1 & 0.3 & 500 & 15 & Bias & 0.045 & \hspace{0.5mm} 0.043 & \hspace{0.5mm} 0.036 & \textbf{0.030} & \hspace{1mm}  0.043 & \hspace{0.5mm} 0.037 & 0.031 & 0.055   \\
\rowcolor{mygray!80} &  &  &  & RMSE & 0.129 & \textbf{0.103} & \textbf{0.103} & 0.119 & \textbf{0.103} & \textbf{0.103} & 0.122 & 0.354    \\
 &  &  & 30 & Bias  & 0.047 & 0.035 & 0.011 & $\bm{-0.002}$ & 0.035 & 0.012 & $-0.003$ & 0.040   \\
\rowcolor{mygray!80} &  &  &  & RMSE & 0.130 & 0.101 & 0.129 & 0.206 & \textbf{0.100} & 0.132 & 0.212 & 0.244    \\
 &  & 1000 & 15 & Bias  & 0.038 & 0.037 & 0.033 & \textbf{0.029} & 0.037 & 0.034 & 0.031 & 0.047   \\
\rowcolor{mygray!80} &  &  &  & RMSE & 0.117 & 0.083 & 0.078 & 0.086 & 0.083 & \textbf{0.077} & 0.088 & 0.340    \\
 &  &  & 30 & Bias  & 0.041 & 0.032 & 0.020 & \textbf{0.018} & 0.033 & 0.022 & 0.019 & 0.035   \\
\rowcolor{mygray!80} &  &  &  & RMSE & 0.119 & 0.080 & 0.091 & 0.137 & \textbf{0.079} & 0.092 & 0.141 & 0.225    \\
\hline
 & 0.5 & 500 & 15 & Bias  & 0.079 & 0.062 & 0.039 & \textbf{0.026} & 0.062 & 0.040 & 0.028 & 0.059   \\
\rowcolor{mygray!80} &  &  &  & RMSE & 0.191 & 0.129 & 0.109 & 0.121 & 0.128 & \textbf{0.108} & 0.123 & 0.325    \\
 &  &  & 30 & Bias  & 0.072 & 0.040 & 0.010 & \textbf{0.000} & 0.041 & 0.011 & \textbf{0.000} & 0.040   \\
\rowcolor{mygray!80} &  &  &  & RMSE & 0.174 & 0.108 & 0.128 & 0.203 & \textbf{0.107} & 0.130 & 0.209 & 0.230    \\
 &  & 1000 & 15 & Bias  & 0.073 & 0.058 & 0.038 & \textbf{0.030} & 0.058 & 0.040 & 0.032 & 0.052   \\
\rowcolor{mygray!80} &  &  &  & RMSE & 0.182 & 0.113 & 0.085 & 0.088 & 0.112 & \textbf{0.083} & 0.089 & 0.313    \\
 &  &  & 30 & Bias  & 0.066 & 0.036 & 0.014 & \textbf{0.010} & 0.037 & 0.015 & 0.011 & 0.031   \\
\rowcolor{mygray!80} &  &  &  & RMSE & 0.166 & 0.088 & 0.091 & 0.137 & \textbf{0.087} & 0.092 & 0.141 & 0.212    \\
\hline
0.3 & 0.3 & 500 & 15 & Bias  & 0.109 & 0.108 & 0.087 & \textbf{0.068} & 0.109 & 0.094 & 0.079 & 0.141   \\
\rowcolor{mygray!80} &  &  &  & RMSE & 0.282 & 0.205 & 0.159 & \textbf{0.149} & 0.204 & 0.154 & 0.150 & 0.251    \\
 &  &  & 30 & Bias  & 0.117 & 0.091 & 0.039 & \textbf{0.019} & 0.094 & 0.045 & 0.021 & 0.120   \\
\rowcolor{mygray!80} &  &  &  & RMSE & 0.286 & 0.187 & \textbf{0.147} & 0.220 & 0.185 & 0.150 & 0.239 & 0.183    \\
 &  & 1000 & 15 & Bias  & 0.102 & 0.104 & 0.092 & \textbf{0.080} & 0.105 & 0.099 & 0.094 & 0.133   \\
\rowcolor{mygray!80} &  &  &  & RMSE & 0.273 & 0.193 & 0.144 & \textbf{0.128} & 0.191 & 0.139 & 0.131 & 0.238    \\
 &  &  & 30 & Bias  & 0.109 & 0.091 & 0.058 & \textbf{0.054} & 0.094 & 0.071 & 0.071 & 0.107   \\
\rowcolor{mygray!80} &  &  &  & RMSE & 0.279 & 0.176 & \textbf{0.118} & 0.160 & 0.173 & 0.122 & 0.179 & 0.159    \\
\hline
 & 0.5 & 500 & 15 & Bias  & 0.191 & 0.162 & 0.106 & \textbf{0.073} & 0.164 & 0.117 & 0.087 & 0.169   \\
\rowcolor{mygray!80} &  &  &  & RMSE & 0.434 & 0.306 & 0.201 & 0.168 & 0.305 & 0.192 & \textbf{0.165} & 0.228   \\
 &  &  & 30 & Bias  & 0.177 & 0.106 & 0.026 & \textbf{0.003} & 0.110 & 0.030 & \textbf{0.003} & 0.115   \\
\rowcolor{mygray!80} &  &  &  & RMSE & 0.396 & 0.228 & \textbf{0.148} & 0.221 & 0.224 & 0.149 & 0.242 & 0.167    \\
 &  & 1000 & 15 & Bias  & 0.182 & 0.156 & 0.108 & \textbf{0.081} & 0.158 & 0.118 & 0.098 & 0.155   \\
\rowcolor{mygray!80} &  &  &  & RMSE & 0.426 & 0.296 & 0.188 & 0.148 & 0.294 & 0.178 & \textbf{0.145} & 0.208    \\
 &  &  & 30 & Bias  & 0.167 & 0.103 & 0.043 & \textbf{0.037} & 0.107 & 0.054 & 0.052 & 0.096   \\
\rowcolor{mygray!80} &  &  &  & RMSE & 0.389 & 0.218 & \textbf{0.117} & 0.161 & 0.214 & 0.118 & 0.181 & 0.137    \\
\bottomrule
\end{tabular}
\begin{flushleft}
\begingroup
\fontsize{9pt}{10.5pt}\selectfont
$\ \bullet \ \ \ $Prev.: Prevalence,\,\, Sev.: Severity,\,\, $I$: Sample size,\,\, $J$: Test length\\
$\ \bullet \ \ \ $MMLE: Marginal maximum likelihood estimation.\\
$\ \bullet \ \ \ $DPD and $\gamma$-divergence: The proposed methods (the second row of the table provides the hyperparameter values).\\
$\ \bullet \ \ \ $Robust MMLE: Robust marginal maximum likelihood estimation by Hong and Cheng (\cite{hong2019}).\\[2pt]
The bias and RMSE values closest to 0 are highlighted in bold for each scenario.
\endgroup
\end{flushleft}
\label{biased}
\end{table}

As shown in Table \ref{random}, in scenarios where Unbiased-type guessing responses occurred uniformly with respect to $\theta$, all methods exhibited consistently low bias. Under these conditions, the proposed DPD and $\gamma$-divergence estimators achieved the lowest RMSE in almost all cases when the aberrant response rate was relatively low. Even with a higher proportion of aberrant responses, adjusting $\beta$ or $\gamma$ to larger values prevented substantial RMSE deterioration for these methods. In contrast, for $(\text{Prev.}, \text{Sev.})=(0.3, 0.5)$, robust MMLE attained the most favorable outcome. In addition, standard MMLE suffered a noticeable RMSE increase as the proportion of aberrant responses rose, resulting in a pronounced decline in estimation accuracy.

In scenarios characterized by a uniform occurrence of Biased-type guessing responses with respect to $\theta$ (Table~\ref{biased}), the benefits of the proposed methods stood out even more distinctly than in the Unbiased-type setting. Specifically, both DPD and $\gamma$-divergence estimators consistently surpassed existing approaches in terms of bias and RMSE across the range of examined conditions. It was also observed that, compared to Unbiased-type responses, selecting slightly larger values for 
$\beta$ or $\gamma$ enhanced the estimation accuracy for Biased-type guessing responses. On the other hand, MMLE experienced pronounced increases in bias and RMSE as the proportion of aberrant responses rose, causing a notable drop in estimation accuracy. Although robust MMLE mitigated these aberrant effects to some degree relative to MMLE, its effectiveness was limited compared with the proposed methods.

Next, the results of Scenario 2, where the probability of guessing responses depended on $\theta$, were presented in Table~\ref{depend}. While it was evident that the proposed methods and robust MMLE outperformed standard MMLE in terms of RMSE in this scenario, identifying a superior approach remained difficult. Under mechanism $R_1$, robust MMLE performed particularly well, especially for the Unbiased-type responses. Conversely, under mechanism $R_2$, the proposed methods with smaller hyperparameters $\beta$ or $\gamma$ often achieved the highest estimation accuracy. Integrating these results with those from Scenario 1 suggests that when the proportion of respondents exhibiting aberrant responses is relatively small, the proposed methods are more likely to achieve improved estimation accuracy than robust MMLE.

\begin{table}[hbtp]
\centering
\fontsize{9pt}{10.5pt}\selectfont
\setlength{\tabcolsep}{5.5pt}
\renewcommand{\arraystretch}{1.2}
\caption{Bias and RMSE of each method in Scenario 2 (the probability of guessing depends on $\theta$): averages over 1000 iterations}
\vspace{1mm}
\begin{tabular}{c@{\hskip 3pt}r@{\hskip 3pt}rr|rrrrrrrrr}
\toprule
\multicolumn{4}{c}{Settings} & &  \raisebox{-1ex}{\multirow{2}{*}{MMLE}} &
\multicolumn{3}{c}{DPD} & \multicolumn{3}{c}{$\gamma$-$\text{divergence}$} & \raisebox{-1ex}{\multirow{2}{*}{\makecell{\vspace{0.2mm} Robust \\ MMLE}}} \\
\cmidrule(lr){1-4}  \cmidrule(lr){7-9} \cmidrule(lr){10-12}  
\multicolumn{1}{c}{Type} & \multicolumn{1}{c}{Mech.} & \multicolumn{1}{c}{$I$} & \multicolumn{1}{c}{$J$} &  &  & \multicolumn{1}{c}{0.1} & \multicolumn{1}{c}{0.3} & \multicolumn{1}{c}{0.5} &  \multicolumn{1}{c}{0.1} & \multicolumn{1}{c}{0.3} & \multicolumn{1}{c}{0.5} &   \\
\midrule
Unbiased & \multicolumn{1}{c}{$R_1$} & 500 & 15 & Bias  & $\bm{-0.026}$ & $-0.057$ & $-.0097$ & $-0.121$ & $-0.058$ & $-0.111$ & $-0.165$ & $-0.104$   \\
\rowcolor{mygray!80} & &  &  & RMSE & 0.297 & 0.222 & 0.196 & 0.194 & 0.219 & 0.184 & 0.214 & \textbf{0.160}    \\
 & & & 30 & Bias  & $\bm{-0.031}$ & $-0.083$ & $-0.119$ & $-0.146$ & $-0.087$ & $-0.194$ & $-0.345$ & $-0.129$   \\
\rowcolor{mygray!80} & &  &  & RMSE & 0.275 & 0.229 & 0.198 & 0.285 & 0.225 & 0.242 & 0.523 & \textbf{0.166}   \\
 & & 1000 & 15 & Bias  & $\bm{-0.031}$ & $-0.066$ & $-0.110$ & $-0.137$ & $-0.067$ & $-0.126$ & $-0.184$ & $-0.113$   \\
\rowcolor{mygray!80} & &  &  & RMSE & 0.293 & 0.219 & 0.195 & 0.190 & 0.216 & 0.184 & 0.210 & \textbf{0.148}    \\
 & &  & 30 & Bias  & $\bm{-0.034}$ & $-0.090$ & $-0.131$ & $-0.161$ & $-0.095$ & $-0.210$ & $-0.355$ & $-0.131$   \\
\rowcolor{mygray!80} & &  &  & RMSE & 0.273 & 0.228 & 0.191 & 0.230 & 0.224 & 0.238 & 0.419 & \textbf{0.153}   \\
\hline
& \multicolumn{1}{c}{$R_2$} & 500 & 15 & Bias  & $\bm{-0.085}$ & $-0.148$ & -0.207 & $-0.226$ & $-0.151$ & -0.244 & $-0.327$ & $-0.244$   \\
\rowcolor{mygray!80} & &  &  & RMSE & 0.322 & 0.238 & 0.246 & 0.263 & \textbf{0.237} & 0.269 & 0.363 & 0.288    \\
 & &  & 30 & Bias  & $\bm{-0.086}$ & $-0.185$ & $-0.236$ & $-0.263$ & $-0.197$ & $-0.388$ & $-0.616$ & $-0.265$   \\
\rowcolor{mygray!80} & &  &  & RMSE & 0.316 & \textbf{0.263} & 0.276 & 0.338 & 0.266 & 0.413 & 0.719 & 0.285   \\
 & & 1000 & 15 & Bias  & $\bm{-0.088}$ & $-0.153$ & $-0.216$ & $-0.238$ & $-0.156$ & $-0.253$ & $-0.339$ & $-0.247$   \\
\rowcolor{mygray!80} & &  &  & RMSE & 0.319 & 0.237 & 0.245 & 0.261 & \textbf{0.236} & 0.269 & 0.358 & 0.279    \\
 & &  & 30 & Bias  & $\bm{-0.090}$ & $-0.193$  & $-0.250$ & $-0.277$ & $-0.205$ & $-0.402$ & $-0.611$ & $-0.267$   \\
\rowcolor{mygray!80} & &  &  & RMSE & 0.312 & \textbf{0.262} & 0.276 & 0.314 & 0.266 & 0.402 & 0.652 & 0.277   \\
\hline
Biased & \multicolumn{1}{c}{$R_1$} & 500 & 15 & Bias  & 0.091 & 0.089 & 0.068 & \textbf{0.042} & 0.090 & 0.074 & 0.052 & 0.130   \\
\rowcolor{mygray!80} & &  &  & RMSE & 0.279 & 0.224 & 0.190 & 0.174 & 0.222 & 0.178 & \textbf{0.161} & 0.196    \\
 & &  & 30 & Bias  & 0.082 & 0.068 & 0.043 & \textbf{0.038} & 0.070 & 0.058 & 0.065 & 0.109   \\
\rowcolor{mygray!80} & &  &  & RMSE & 0.253 & 0.213 & 0.170 & 0.245 & 0.209 & 0.167 & 0.306 & \textbf{0.156}   \\
 & & 1000 & 15 & Bias  & 0.084 & 0.077 & 0.048 & \textbf{0.019} & 0.078 & 0.053 & 0.023 & 0.114   \\
\rowcolor{mygray!80} & &  &  & RMSE & 0.270 & 0.213 & 0.176 & 0.155 & 0.211 & 0.161 & \textbf{0.133} & 0.170    \\
 & &  & 30 & Bias  & 0.078 & 0.060 & 0.028 & \textbf{0.016} & 0.062 & 0.037 & 0.022 & 0.105   \\
\rowcolor{mygray!80} & &  &  & RMSE & 0.248 & 0.205 & 0.148 & 0.163 & 0.201 & \textbf{0.135} & 0.187 & \textbf{0.135}   \\
\hline
 & \multicolumn{1}{c}{$R_2$} & 500 & 15 & Bias & 0.037 & $\bm{-0.001}$ & $-0.048$ & $-0.076$ & $\bm{-0.001}$ & $-0.054$ & $-0.092$ & $-0.055$  \\
\rowcolor{mygray!80} & &  &  & RMSE & 0.227 & 0.174 & 0.157 & 0.165 & 0.172 & \textbf{0.151} & 0.169 & 0.230   \\
 &  &  & 30 & Bias  & \textbf{0.039} & $-0.045$ & $-0.139$ & $-0.196$ & $-0.047$ & $-0.187$ & $-0.315$ & $-0.085$   \\
\rowcolor{mygray!80} & &  &  & RMSE & 0.228 & 0.186 & 0.206 & 0.307 & 0.183 & 0.237 & 0.431 & \textbf{0.157}   \\
 & & 1000 & 15 & Bias  & 0.032 & $\bm{-0.008}$ & $-0.060$ & $-0.092$ & $\bm{-0.008}$ & $-0.066$ & $-0.109$ & $-0.062$   \\
\rowcolor{mygray!80} & &  &  & RMSE & 0.222 & 0.167 & 0.150 & 0.153 & 0.165 & \textbf{0.142} & 0.155 & 0.215    \\
 & &  & 30 & Bias  & \textbf{0.033} & $-0.056$ & $-0.155$ & $-0.212$ & $-0.058$ & $-0.206$ & $-0.334$ & $-0.091$   \\
\rowcolor{mygray!80} & &  &  & RMSE & 0.222 & 0.180 & 0.201 & 0.263 & 0.178 & 0.236 & 0.379 & \textbf{0.142}   \\
\bottomrule
\end{tabular}
\begin{flushleft}
\begingroup
\fontsize{9pt}{10.5pt}\selectfont
$\ \bullet \ \ \ $Type: Unbiased: $P(u=1)=P(u=0)=0.5$,\,\, Biased: $P(u=1)=0.8, \ P(u=0)=0.2$. \\
$\ \bullet \ \ \ $Mech.: The probability of generating aberrant responses for each responder; \\
\centering
\vspace{1mm}
$R_1(\theta)=1/(1+\exp(0.5(\theta+5)))$, \quad $R_2(\theta)=1/(1+\exp(1.5(\theta+2)))$.\\
\vspace{1mm}
\raggedright
$\ \bullet \ \ \ $$I$: Sample size,\,\, $J$: Test length\\
$\ \bullet \ \ \ $MMLE: Marginal maximum likelihood estimation.\\
$\ \bullet \ \ \ $DPD, $\gamma$-divergence: The proposed methods (the second row of the table provides the hyperparameter values). \\
$\ \bullet \ \ \ $Robust MMLE: Robust marginal maximum likelihood estimation by Hong and Cheng (\cite{hong2019}).\\[2pt]
The bias and RMSE values closest to 0 are highlighted in bold for each scenario.
\endgroup
\end{flushleft}
\label{depend}
\end{table}

Table \ref{no} highlighted that standard MMLE achieved high efficiency without aberrant responses. Similarly, the proposed robust estimation methods performed comparably to MMLE when small hyperparameter values were selected, whereas higher hyperparameter settings reduced the proposed methods' accuracy. This aligns with the theoretical consistency that DPD and $\gamma$-divergence converge to KL divergence as the hyperparameters approach zero. The results also suggested that the estimation accuracy of robust MMLE, designed to accommodate aberrant responses, is more likely to deteriorate in scenarios with few or no aberrant responses than the other estimation methods.

\begin{table}[hbtp]
\fontsize{9pt}{10.5pt}\selectfont
\setlength{\tabcolsep}{7.8pt}
\centering
\renewcommand{\arraystretch}{1.2}
\caption{Bias and RMSE of each method in Scenario 3 (no guessing): averages over 1000 iterations}
\vspace{1mm}
\begin{tabular}{rr|rrrrrrrrr}
\toprule
\multicolumn{2}{c}{Settings} & &  \raisebox{-1ex}{\multirow{2}{*}{MMLE}} &
\multicolumn{3}{c}{DPD} & \multicolumn{3}{c}{$\gamma$-$\text{divergence}$} & \raisebox{-1ex}{\multirow{2}{*}{ \makecell{\vspace{0.2mm} Robust \\ MMLE}}} \\
\cmidrule(lr){1-2}  \cmidrule(lr){5-7} \cmidrule(lr){8-10}  
 \multicolumn{1}{c}{$I$} & \multicolumn{1}{c}{$J$} &  &  & \multicolumn{1}{c}{0.1} & \multicolumn{1}{c}{0.3} & \multicolumn{1}{c}{0.5} &  \multicolumn{1}{c}{0.1} & \multicolumn{1}{c}{0.3} & \multicolumn{1}{c}{0.5} &   \\
\midrule
 500 & 15 & Bias  & \textbf{0.004} & 0.005 & 0.008 & 0.010 & 0.005 & 0.008 & 0.011 & 0.008   \\
\rowcolor{mygray!80}  &  & RMSE & \textbf{0.082} & 0.083 & 0.095 & 0.114 & 0.083  & 0.096 & 0.115 & 0.424   \\
  & 30 & Bias  & 0.004 & \textbf{0.001} & $-0.006$ & $-0.013$ & \textbf{0.001} & $-0.006$ & $-0.014$ & 0.006   \\
\rowcolor{mygray!80}  &  & RMSE & \textbf{0.080} & 0.095 & 0.129 & 0.208 & 0.094 & 0.133 & 0.215 & 0.294 \\
  & 45 & Bias  & \textbf{0.004} & 0.011 & 0.012 & 0.009 & 0.011 & 0.013 & 0.010 & \textbf{0.004}  \\
\rowcolor{mygray!80}  &  & RMSE & \textbf{0.080} & 0.094 & 0.185 & 0.412 & 0.094 & 0.186 & 0.444 & 0.244   \\
 1000 & 15 & Bias  & $\bm{-0.002}$ & $-0.003$ & $-0.005$ & $-0.007$ & $-0.003$ & $-0.005$ & $-0.007$ & $-0.005$   \\
\rowcolor{mygray!80}  &  & RMSE & \textbf{0.058} & \textbf{0.058} & 0.067 & 0.080 & \textbf{0.058} & 0.067 & 0.081 & 0.411   \\
  & 30 & Bias  & $\bm{-0.001}$ & $-0.002$ & $-0.006$ & $-0.007$ & $-0.002$ & $-0.006$ & $-0.007$ & $-0.002$   \\
\rowcolor{mygray!80}  &  & RMSE & \textbf{0.057} & 0.060 & 0.086 & 0.133 & 0.060 & 0.086 & 0.133 & 0.281  \\
  & 45 & Bias  & $-0.002$ & \textbf{0.000} & $-0.004$ & $-0.005$ & \textbf{0.000} & $-0.004$ & $-0.006$ & $-0.002$   \\
\rowcolor{mygray!80}  &  & RMSE & \textbf{0.057} & 0.066 & 0.126 & 0.280 & 0.066 & 0.126 & 0.278 & 0.229   \\
 2000 & 15 & Bias  & 0.021 & 0.020 & 0.016 & \textbf{0.013} & 0.020 & 0.017 & \textbf{0.013} & 0.024   \\
\rowcolor{mygray!80}  &  & RMSE & \textbf{0.046} & \textbf{0.046} & 0.050 & 0.059 & \textbf{0.046} & 0.051 & 0.060 & 0.404   \\
  & 30 & Bias  & 0.021 & 0.019 & 0.009 & \textbf{0.001} & 0.019 & 0.009 & $\bm{-0.001}$ & 0.022   \\
\rowcolor{mygray!80}  &  & RMSE & \textbf{0.045} & 0.046 & 0.062 & 0.094 & 0.046 & 0.062 & 0.094 & 0.276   \\
  & 45 & Bias  & 0.021 & 0.018 & $\bm{-0.001}$ & $-0.011$ & 0.018 & $-0.002$ & $-0.015$ & 0.021   \\
\rowcolor{mygray!80}  &  & RMSE & \textbf{0.045} & 0.049 & 0.089 & 0.194 & 0.049 & 0.089 & 0.187 & 0.221   \\
 5000 & 15 & Bias  & 0.020 & 0.019 & 0.014 & \textbf{0.008} & 0.019 & 0.014 & \textbf{0.008} & 0.021   \\
\rowcolor{mygray!80}  &  & RMSE & \textbf{0.032} & \textbf{0.032} & 0.033 & 0.038 & \textbf{0.032} & 0.033 & 0.038 & 0.400   \\
  & 30 & Bias  & 0.019 & 0.017 & \textbf{0.001} & $-0.010$ & 0.017 & \textbf{0.001} & -0.013 & 0.019   \\
\rowcolor{mygray!80}  &  & RMSE & \textbf{0.032} & \textbf{0.032} & 0.040 & 0.061 & \textbf{0.032} & 0.040 & 0.062 & 0.271   \\
  & 45 & Bias  & 0.019 & \textbf{0.014} & $\bm{-0.014}$ & $-0.028$ & \textbf{0.014} & $-0.017$ & $-0.035$ & 0.018   \\
\rowcolor{mygray!80}  &  & RMSE & \textbf{0.031} & 0.032 & 0.058 & 0.122 & 0.032 & 0.059 & 0.120 & 0.217   \\
\bottomrule
\end{tabular}
\begin{flushleft}
\begingroup
\fontsize{9pt}{10.5pt}\selectfont
$\bullet \ \ \ $$I$: Sample size,\,\, $J$: Test length.\\
$\bullet \ \ \ $MMLE: Marginal maximum likelihood estimation.\\
$\bullet \ \ \ $DPD and $\gamma$-divergence: The proposed methods (the second row of the table provides the hyperparameter values).\\
$\bullet \ \ \ $Robust MMLE: Robust marginal maximum likelihood estimation by Hong and Cheng (\cite{hong2019}).\\[2pt]
The bias and RMSE values closest to 0 are highlighted in bold for each scenario.
\endgroup
\end{flushleft}
\label{no}
\end{table}

In summary, under conditions involving aberrant responses, the proposed methods based on DPD and $\gamma$-divergence demonstrated more stable estimation accuracy compared to standard MMLE and robust MMLE. In particular, for Biased-type guessing responses or scenarios with high proportions of aberrant responses, the robustness of the proposed methods was particularly pronounced, often outperforming existing methods in bias and RMSE. On the other hand, depending on the mechanism and proportion of aberrant responses, robust MMLE occasionally achieved the best accuracy. A significant limitation of robust MMLE, however, lies in its reliance on ability estimates to compute weights, which makes it more susceptible to instability in ability estimation, particularly when the test length is small. In contrast, the proposed methods exhibited relatively stable estimation accuracy even with fewer items. However, their performance tended to deteriorate when the test length was large relative to the number of responders, which may be attributed to the simultaneous optimization framework employed by the proposed methods. Furthermore, even in situations with minimal aberrant responses, the proposed methods achieved accuracy comparable to MMLE by setting the hyperparameters to small values. In contrast, robust MMLE suffered from a significant reduction in estimation accuracy in such cases. Overall, the proposed methods demonstrated high robustness and flexibility, maintaining excellent estimation accuracy across various scenarios involving aberrant responses.

\subsection{Analysis Based on Influence Functions}\label{Analysis Based on Influence Functions}

This subsection investigates the robustness of the estimation procedure by examining the influence function, as defined in Eq.~\eqref{influence_function}. Specifically, we consider a scenario with $J = 5$ items whose true difficulty parameters are set to $(b_1, b_2, b_3, b_4, b_5) = (-2, -1, 0, 1, 2)$. We generate $I = 2000$ response patterns under the 1PLM, independently drawing the ability parameters from a standard normal distribution. Using the resulting sample responses $\bm{u}_1, \dots, \bm{u}_I$, we estimate the difficulty parameters via both MMLE and the proposed robust estimation methods. For each response vector $\bm{u} \in \{0, 1\}^5$, the empirical approximation of the influence function with respect to the population distribution $p$ and the estimated parameters $\bm{\hat{b}}_{\alpha}$, is given by
\[
\widehat{\text{IF}}(\bm{u}; \bm{\hat{b}}_{\alpha}, p) = -\hat{V}_{\alpha}(\bm{\hat{b}}_{\alpha})^{-1} \psi_{\alpha}(\bm{\hat{b}}_{\alpha}, \bm{u}).
\]

Table~\ref{IF} presents the average $L^2$-norms of the influence functions for each response pattern under each estimation procedure, with their occurrence probabilities ordered in descending magnitude. The last row of the table summarizes the gross-error sensitivity for each method. Several notable findings emerge from these results.

First, under MMLE, the norms of the influence functions increase considerably for response patterns that occur infrequently. In contrast, the proposed robust estimation methods effectively reduce the influence function norms for these low-probability patterns, with the effect becoming more pronounced as the hyperparameters increase. Notably, the $\gamma$-divergence approach exhibits a slightly more pronounced reduction than DPD, suggesting enhanced robustness against outliers or rare response patterns, such as incorrectly answering an easy item while correctly answering a difficult one.

An intriguing exception arises for the two extreme response patterns $(1,1,1,1,1)$ and $(0,0,0,0,0)$, wherein the influence function norms become larger under both robust methods compared to MMLE. Although these patterns could potentially be considered outliers---since they represent uniformly correct or uniformly incorrect responses---they do not conform to the conventional notion of outliers in the robust estimation framework. Instead, their influence is amplified under DPD and $\gamma$-divergence.

One possible explanation lies in how each method integrates response information into their estimators. In the MMLE framework, the difficulty of a specific item is estimated solely based on the responses associated with that item, whereby an unexpected correct response to a high-difficulty item directly yields significant influence. On the other hand, robust methods collectively incorporate information from all items, thereby capturing the overall coherence of the response pattern. When all items are answered identically---entirely correct or entirely incorrect---the responses to other items serve as a kind of contextual evidence, enhancing the plausibility of the response to a specific item. Thus, although these patterns appear extreme, their internal coherence is more strongly weighted in robust estimation, leading to an amplified influence compared to MMLE. Nevertheless, such uniform response patterns are rare, and the responses to the most difficult or manageable items typically have the most significant potential to influence the estimates. Their scarcity and heightened internal coherence result in larger influence function values for robust methods than for MMLE.

Finally, a gross-error sensitivity comparison reveals that DPD-based and $\gamma$-divergence-based estimators yield lower values than MMLE. This indicates that the proposed methods effectively suppress sensitivity to outliers (responses with low occurrence probabilities) and achieve high robustness. Excluding the uniform response uncovers a consistent drop in gross-error sensitivity as the hyperparameters rise. This demonstrates that the proposed methods allow for flexible robustness control through hyperparameter tuning. However, it is important to note that uniformly correct or uniformly incorrect responses are not automatically regarded as outliers in the robust estimation methods; instead, their high internal coherence amplifies their influence compared to MMLE.

\begin{table}[hbtp]
\fontsize{9pt}{10.5pt}\selectfont
\centering
\setlength{\tabcolsep}{8pt}
\renewcommand{\arraystretch}{1.2}
\caption{$L^2$-norms of the estimated influence function $\|\mathrm{\widehat{IF}}(\bm{u}; \bm{\hat{b}}_{\alpha}, p)\|$ for each response pattern $(I=2000)$: averages over 1000 iterations}
\vspace{1mm}
\begin{tabular}{c|c|rrrrrrr}
\toprule
\multicolumn{1}{c}{\raisebox{-0.6ex}{\multirow{2}{*}{Response}}} &\multicolumn{1}{c}{ \raisebox{-0.6ex}{\multirow{2}{*}{\hspace{-1.5mm} Prob.~(\%)}}\hspace{-1.5mm} } & \raisebox{-0.6ex}{\multirow{2}{*}{MMLE}} & \multicolumn{3}{c}{$\text{DPD}$} & \multicolumn{3}{c}{$\gamma$-$\text{divergence}$} \\
 \cmidrule(lr){4-6} \cmidrule(lr){7-9}
\multicolumn{1}{c}{} & \multicolumn{1}{c}{} & & \multicolumn{1}{c}{0.1} & \multicolumn{1}{c}{0.3} & \multicolumn{1}{c}{0.5} & \multicolumn{1}{c}{0.1} & \multicolumn{1}{c}{0.3} & \multicolumn{1}{c}{0.5} \\
\midrule
$(1,\, 1,\, 0,\, 0,\, 0)$ & 23.637 & 2.552 & 2.679 & 2.883 & 3.049 & 2.680 & 2.887 & 3.053 \\ 
$(1,\, 1,\, 1,\, 0,\, 0)$ & 23.637 & 2.582 & 2.711 & 2.921 & 3.095 & 2.712 & 2.925 & 3.098 \\ 
\rowcolor{mygray!80} $(1,\, 0,\, 0,\, 0,\, 0)$ & 13.059 & 4.155 & 4.331 & 4.697 & 5.099 & 4.329 & 4.814 & 5.073 \\ 
\rowcolor{mygray!80} $(1,\, 1,\, 1,\, 1,\, 0)$ & 13.059 & 4.281 & 4.418 & 4.855 & 5.280 & 4.466 & 4.846 & 5.256 \\ 
$(1,\, 0,\, 1,\, 0,\, 0)$ & 4.309 & 4.242 & 4.027 & 3.510 & 2.968 & 4.025 & 3.487 & 2.893 \\ 
$(1,\, 1,\, 0,\, 1,\, 0)$ & 4.309 & 4.352 & 4.134 & 3.611 & 3.055 & 4.132 & 3.585 & 2.973 \\ 
\rowcolor{mygray!80} $(0,\, 0,\, 0,\, 0,\,0)$ & 4.170 & 9.022 & 9.865 & 11.470 & 12.878 & 9.878 & 11.593 & 13.826 \\ 
\rowcolor{mygray!80} $(1,\, 1,\, 1,\, 1,\, 1)$ & 4.170 & 9.393 & 10.294 & 12.001 & 13.481 & 10.313 & 12.142 & 13.776 \\ 
$(0,\, 1, \, 0, \, 0, \,0)$ & 2.381 & 9.388 & 9.250 & 8.550 & 7.482 & 9.245 & 8.501 & 7.338 \\ 
$(1,\, 1,\, 1,\, 0,\, 1)$ & 2.381 & 9.797 & 9.679 & 8.985 & 7.892 & 9.677 & 8.939 & 7.749 \\ 
\rowcolor{mygray!80} $(1,\, 0,\, 0,\, 1,\,0)$ & 0.786 & 5.887 & 4.760 & 3.046 & 2.016 & 4.747 & 2.928 & 1.708 \\ 
\rowcolor{mygray!80} $(1,\, 0,\, 1,\, 1,\, 0)$ & 0.786 & 5.899 & 4.764 & 3.041 & 2.009 & 4.751 & 2.924 & 1.702 \\
$(0,\, 1, \, 1, \, 0, \,0)$ & 0.786 & 9.970 & 8.574 & 6.017 & 4.079 & 8.556 & 5.872 & 3.724 \\ 
$(1,\, 1,\, 0,\, 0,\, 1)$ & 0.786 & 10.388 & 8.964 & 6.323 & 4.300 & 8.948 & 6.175 & 3.935  \\ 
\rowcolor{mygray!80} $(0,\, 0, \, 1, \, 0, \,0)$ & 0.434 & 10.406 & 8.646 & 5.745 & 3.767 & 8.629 & 5.598 & 3.398 \\ 
\rowcolor{mygray!80} $(1,\, 1,\, 0,\, 1,\, 1)$ & 0.434 & 10.838 & 9.027 & 6.018 & 3.951 & 9.013 & 5.868 & 3.569 \\ 
$(0,\, 1, \, 0, \, 1, \, 0)$ &0.143 & 11.014 & 7.930 & 4.035 & 2.232 & 7.897 & 3.794 & 1.692 \\ 
$(1,\, 0,\, 1,\, 0,\, 1)$ & 0.143 & 11.367 & 8.222 & 4.214 & 2.333 & 8.191 & 3.970 & 1.787 \\ 
\rowcolor{mygray!80} $(0,\, 1, \, 1, \, 1, \,0)$ & 0.143 & 10.983 & 7.909 & 4.032 & 2.233 & 7.877 & 3.791 & 1.694 \\ 
\rowcolor{mygray!80} $(1,\, 0,\, 0,\, 0,\,1)$ & 0.143 & 11.324 & 8.204 & 4.219 & 2.339 & 8.173 & 3.976 & 1.794 \\ 
$(0,\, 0, \, 0, \, 1, \,0)$ & 0.079 & 11.430 & 7.959 & 3.853 & 2.102 & 7.927 & 3.601 & 1.536 \\ 
$(1,\, 0,\, 1,\, 1,\, 1)$ & 0.079  & 11.804 & 8.247 & 4.012 & 2.186 & 8.217 & 3.757 & 1.612 \\ 
\rowcolor{mygray!80} $(0,\, 0, \, 1, \, 1, \,0)$ & 0.026 & 12.293 & 7.442 & 2.816 & 1.457 & 7.395 & 2.495 & 0.780  \\  
\rowcolor{mygray!80} $(1,\, 0,\, 0,\, 1,\, 1)$ & 0.026 & 12.672 & 7.704 & 2.929 & 1.502 & 7.658 & 2.604 & 0.820 \\ 
$(0,\, 1, \, 1, \, 0, \,1)$ & 0.026 & 14.821 & 9.107 & 3.510 & 1.742 & 9.055 & 3.145 & 1.013  \\ 
$(0,\, 1, \, 0, \, 0, \,1)$ & 0.026 & 14.815 & 9.109 & 3.513 & 1.744 & 9.058 & 3.150 & 1.014  \\
\rowcolor{mygray!80} $(0,\, 0, \, 0, \, 0, \,1)$ & 0.014 & 15.092 & 9.030 & 3.368 & 1.681 & 8.979 & 2.993 & 0.934  \\ 
\rowcolor{mygray!80} $(0,\, 1, \, 1, \, 1, \,1)$ & 0.014 & 15.130 & 9.029 & 3.354 & 1.674 & 8.977 & 2.977 & 0.924 \\ 
$(0,\, 0, \, 1, \, 0, \,1)$ & 0.005 & 15.959 & 8.258 & 2.438 & 1.254 & 8.191 & 2.008 & 0.456 \\ 
$(0,\, 1, \, 0, \, 1, \,1)$ & 0.005 & 16.006 & 8.274 & 2.438 & 1.253 & 8.206 & 2.006 & 0.454 \\ 
\rowcolor{mygray!80} $(0,\, 0, \, 0, \, 1, \,1)$ & 0.001 & 17.194 & 7.487 & 1.758 & 1.035 & 7.403 & 1.276 & 0.204 \\ 
\rowcolor{mygray!80} $(0,\, 0, \, 1, \, 1, \,1)$ & 0.001  & 17.198 & 7.484 & 1.757 & 1.035 & 7.400 & 1.275 & 0.204 \\
\midrule
\multicolumn{2}{c|}{Gross-error sensitivity} & 17.198  & 10.294 & 12.001 & 13.481 & 10.313 & 12.142 & 13.826   \\
\bottomrule
\end{tabular}
\begin{flushleft}
\begingroup
\fontsize{9pt}{10.5pt}\selectfont
$\bullet \ \ \ $Response: Response patterns $\bm{u} \in \{0,1\}^5$.\\
$\bullet \ \ \ $Prob.: The probability of each response pattern occurring (expressed as a percentage).\\
$\bullet \ \ \ $MMLE: Marginal maximum likelihood estimation.\\
$\bullet \ \ \ $DPD and $\gamma$-divergence: The proposed methods (the second row of the table provides the hyperparameter values).
\endgroup
\end{flushleft}
\label{IF}
\end{table}

\section{Conclusion and Future Directions}\label{Conclusion and Future Directions}

This study proposed novel robust estimation methods via divergence measures to mitigate the impact of aberrant responses on item parameter estimation in IRT models. Unlike conventional marginal maximum likelihood estimation (MMLE), which is based on KL divergence, the proposed approaches introduce density power divergence (DPD) and $\gamma$-divergence as alternative measures. These divergences can control sensitivity to outliers through hyperparameter tuning, achieving both high robustness and efficiency when the parameters are appropriately selected. Since DPD and $\gamma$-divergence are generalizations of KL divergence, the proposed robust estimation methods can be considered extensions of standard MMLE.

The simulation results presented in Section \ref{Simulation} demonstrated that the proposed methods utilizing DPD and 
$\gamma$-divergence consistently outperformed conventional MMLE in estimation accuracy under various conditions with aberrant responses. Compared with robust MMLE proposed by Hong and Cheng (\cite{hong2019}), the proposed methods generally achieved better performance, although robust MMLE outperformed them under specific conditions. Even in scenarios with few aberrant responses, unlike robust MMLE, the proposed methods, with small hyperparameter settings, achieved estimation accuracy comparable to that of MMLE. An analysis based on influence functions verified that responses with low occurrence probabilities (i.e., those considered aberrant) exert less influence on the estimators as the hyperparameter values of the proposed methods increase. This result provides theoretical validation that larger hyperparameter values enhance robustness against aberrant responses. Overall, the findings highlight the robustness and versatility of the proposed methods, establishing them as a reliable alternative for item parameter estimation in IRT models across diverse conditions.

A significant difference between the proposed methods and standard MMLE pertains to the optimization process employed in the algorithms' update equations. While MMLE updates parameters for each item independently as implied by Eq.~\eqref{mmle_eq}, the proposed methods require simultaneous optimization of all item difficulty parameters. This joint optimization allows broader utilization of response information across items for each item's estimation, enhancing robustness to outliers. However, with increasing test length, the simultaneous optimization approach may lead to instability in parameter estimation and an increase in computational costs.

To extend the robust estimation of 1PLM item parameters to more general IRT models, improving estimation stability and computational efficiency is essential while maintaining robustness to outliers. One promising approach is to design algorithms that group multiple items and update parameters in stages rather than optimizing all item parameters simultaneously. Such strategies could maintain a certain level of robustness while ensuring stable estimation and practical computation times for large-scale parameter settings. Additionally, leveraging acceleration techniques for EM algorithms, such as those introduced by Ramsay (\cite{ramsay1975}) and Jamshidian and Jennrich (\cite{jamshidian1997}), could further improve the convergence speed of the algorithms. For instance, the R package `mirt' (Chalmers, \cite{chalmers2012}) employs Ramsay's acceleration method by default, enhancing computational efficiency. These insights suggest a promising direction for achieving a balance between robustness to outliers and computational efficiency, thereby broadening the applicability of IRT.

A key challenge in applying robust estimation methods based on DPD and $\gamma$-divergence lies in determining suitable hyperparameters. If these parameters are not properly set, the methods may fail to mitigate the influence of outliers effectively or may introduce excessive bias and variance. Recent studies (e.g., Basak et al., \cite{basak2021}; Sugasawa and Yonekura, \cite{sugasawa2021}) have advanced this field by introducing innovative frameworks for hyperparameter selection, paving the way for more structured and empirically grounded approaches to optimization. These advancements are crucial in facilitating a well-balanced trade-off between outlier resistance and overall estimation accuracy in robust estimation methods.

In this study, we proposed robust estimation methods based on DPD and $\gamma$-divergence as extensions of MMLE. We confirmed through simulation experiments that they consistently demonstrated stable estimation accuracy across diverse scenarios. On the other hand, challenges related to computational burden and hyperparameter tuning were also highlighted. Future research should explore extensions of 2PLM, 3PLM, polytomous models, and multidimensional models, while also developing fast algorithms for large-scale parameter estimation and establishing more systematic hyperparameter selection methods. By pursuing these directions, more versatile and practical robust estimation methods are expected to be developed, enabling their application in various testing scenarios.

\newpage
\renewcommand*{\bibfont}{\small}
\printbibliography

\newpage
\section*{Appendix}
\begin{appendix}
\section{The Explicit Expression of 
  \texorpdfstring{$V_{\beta}(\bm{b}; \bm{u})$, 
  $V_{\gamma}(\bm{b}; \bm{u})$, 
  and $V_{\text{KL}}(\bm{b}; \bm{u})$}{Explicit Expression}}
\label{appendixA}
In this appendix, we provide the explicit expressions of the Jacobian matrix $V_{\beta}(\bm{b}; \bm{u})$, $V_{\gamma}(\bm{b}; \bm{u})$ and $V_{\text{KL}}(\bm{b}; \bm{u})$ introduced in Section \ref{Asymptotic Properties}. Let $\Sigma(\theta,\bm{b})$ denote the Jacobian matrix of the function $\bm{b} \mapsto \xi(\bm{u},\theta,\bm{b})$. This matrix is a $J\times J$ diagonal matrix, with its $(j,j)$ entry given by 
\[
-D^2P_j(\theta)(1-P_j(\theta)).
\]
Then, $V_{\beta}(\bm{b}; \bm{u})$, $V_{\gamma}(\bm{b}; \bm{u})$ and $V_{\text{KL}}(\bm{b}; \bm{u})$ are respectively expressed as
\begin{align*}
V_{\beta}(\bm{b};\bm{u}) =& -\frac{\int q(\bm{u}|\theta;\bm{b})^{1+\beta} \{ (1+\beta)\xi(\bm{u},\theta,\bm{b}) \xi(\bm{u},\theta,\bm{b})^{\top} + \Sigma(\theta,\bm{b}) \} f(\theta)d\theta}{q(\bm{u};\bm{b})} \\
& + \frac{\int q(\bm{u}|\theta;\bm{b})^{1+\beta}\xi(\bm{u},\theta,\bm{b}) f(\theta) d\theta \int q(\bm{u}|\theta;\bm{b})\xi(\bm{u},\theta,\bm{b})^{\top} f(\theta) d\theta}{q(\bm{u};\bm{b})^2} \\
& + \int q(\bm{u}|\theta;\bm{b})^{1+\beta} \{ (1+\beta)\xi(\bm{u},\theta,\bm{b}) \xi(\bm{u},\theta,\bm{b})^{\top} + \Sigma(\theta,\bm{b}) \} f(\theta)d\theta d\bm{u}, \\[10pt]
V_{\gamma}(\bm{b};\bm{u}) =& -\frac{\int q(\bm{u}|\theta;\bm{b})^{1+\gamma} \{ (1+\gamma)\xi(\bm{u},\theta,\bm{b}) \xi(\bm{u},\theta,\bm{b})^{\top} + \Sigma(\theta,\bm{b}) \} f(\theta)d\theta}{q(\bm{u};\bm{b})} \int q(\bm{u}|\theta;\bm{b})^{1+\gamma} f(\theta) d\theta d\bm{u} \\
& + \frac{\int q(\bm{u}|\theta;\bm{b})^{1+\gamma}\xi(\bm{u},\theta,\bm{b}) f(\theta) d\theta \int q(\bm{u}|\theta;\bm{b})\xi(\bm{u},\theta,\bm{b})^{\top} f(\theta) d\theta}{q(\bm{u};\bm{b})^2}  \int q(\bm{u}|\theta;\bm{b})^{1+\gamma} f(\theta) d\theta d\bm{u} \\
& - \frac{\int q(\bm{u}|\theta;\bm{b})^{1+\gamma} f(\theta) d\theta \int q(\bm{u}|\theta;\bm{b})\xi(\bm{u},\theta,\bm{b})^{\top} f(\theta) d\theta}{q(\bm{u};\bm{b})^2} \int q(\bm{u}|\theta;\bm{b})^{1+\gamma} \xi(\bm{u},\theta,\bm{b}) f(\theta) d\theta d\bm{u} \\
& + \frac{\int q(\bm{u}|\theta;\bm{b})^{1+\gamma} f(\theta)d\theta}{q(\bm{u};\bm{b})} \int q(\bm{u}|\theta;\bm{b})^{1+\gamma} \{ (1+\gamma)\xi(\bm{u},\theta,\bm{b}) \xi(\bm{u},\theta,\bm{b})^{\top} + \Sigma(\theta,\bm{b})\} f(\theta)d\theta d\bm{u}, \\[10pt]
V_{\text{KL}}(\bm{b}; \bm{u}) = &-\frac{\int q(\bm{u}|\theta;\bm{b}) \{\xi(\bm{u},\theta,\bm{b}) \xi(\bm{u},\theta,\bm{b})^{\top} + \Sigma(\theta,\bm{b}) \} f(\theta)d\theta}{q(\bm{u};\bm{b})} \\
& + \frac{\int q(\bm{u}|\theta;\bm{b})\xi(\bm{u},\theta,\bm{b}) f(\theta) d\theta \int q(\bm{u}|\theta;\bm{b})\xi(\bm{u},\theta,\bm{b})^{\top} f(\theta) d\theta}{q(\bm{u};\bm{b})^2}.\\
\end{align*}
\end{appendix}

\end{document}